\documentclass[conference]{IEEEtran}
\IEEEoverridecommandlockouts

\usepackage{amsmath,amssymb,amsfonts}
\usepackage{algorithm}
\usepackage{algpseudocode}

\usepackage{graphicx}
\usepackage{textcomp}
\usepackage{xcolor}
\def\BibTeX{{\rm B\kern-.05em{\sc i\kern-.025em b}\kern-.08em
    T\kern-.1667em\lower.7ex\hbox{E}\kern-.125emX}}
\usepackage{cite}  


\usepackage{enumerate}
\usepackage{cleveref}
\usepackage{comment}
\usepackage{algorithm}
\usepackage{booktabs}
\usepackage{upgreek}
\usepackage{romannum}

\usepackage{times}
\usepackage{courier}
\usepackage{helvet}
\usepackage[hyphens]{url}
\usepackage{caption}
\usepackage{subcaption}
\usepackage{mathtools}

\usepackage{tikz}
\usetikzlibrary{arrows.meta, positioning, fit, backgrounds, decorations.pathreplacing}

\usepackage{amsthm}
\theoremstyle{plain}
\newtheorem{theorem}{Theorem} 
\newtheorem{lemma}{Lemma}[subsection]

\newtheorem{corollary}{Corollary}

\theoremstyle{definition}
\newtheorem{definition}{Definition}
\newtheorem{example}{Example}
\newtheorem{assumption}{Assumption}[subsection]

\theoremstyle{remark}
\newtheorem{remark}{Remark}[subsection]

\urlstyle{rm} 


\begin{document}
\title{
Bayesian Safety Guarantees for Port-Hamiltonian Systems with Learned Energy Functions
%
}

\author{Chi Ho Leung and Philip E. Par\'{e}*
    \thanks{*Chi Ho Leung and Philip E. Par\'e are with the Elmore Family School of Electrical and Computer Engineering, Purdue University, USA.
    E-mail: leung61@purdue.edu, 
            philpare@purdue.edu. 
    This material is based upon work supported in part by the US National Science Foundation (NSF-ECCS \#2238388).
    }
}

\maketitle

\begin{abstract}
Control barrier functions for port-Hamiltonian systems inherit model uncertainty when the Hamiltonian is learned from data. 
We show how to propagate this uncertainty into a safety filter with independently tunable credibility budgets.
To propagate this uncertainty, we employ a two-stage Bayesian approach.
First, posterior prediction over the Hamiltonian yields credible bands for the energy storage, producing Bayesian barriers whose safe sets are high-probability inner approximations of the true allowable set with credibility $1 - (\eta_{\mathrm{ptB}})$. 
Independently, a drift credible ellipsoid accounts for vector field uncertainty in the CBF inequality with credibility $1 - (\eta_{\rm dr})$. 
Since energy and drift uncertainties enter through disjoint credible sets, the end-to-end safety guarantee is at least $1 - (\eta_{\rm dr} + \eta_{\mathrm{ptB}})$. 
Experiments on a mass-spring oscillator with a GP-learned Hamiltonian show that the proposed filter preserves safety despite limited and noisy observations.
Moreover, we show that the proposed framework yields a larger safe set than an unstructured GP-CBF alternative on a planar manipulator.
\end{abstract}

\begin{IEEEkeywords}
Safety-Critical Control, Learning-Based Control, Machine Learning and Control, Port-Hamiltonian Systems, Control Barrier Functions
\end{IEEEkeywords}

\section{Introduction}

Port-Hamiltonian systems (PHS) enjoy a clean separation between what is known from first principles---the interconnection, dissipation, and input structure $(J, R, G)$---and what must be identified from data---the Hamiltonian~$H$ encoding stored energy.
This separation is well exploited in passivity-based control design~\cite{ortega2002interconnection}, where the known structure guides the shaping of a desired energy function, and more recently in structure-preserving learning, where Gaussian process and neural network methods infer~$H$ while respecting passivity and energy balance~\cite{beckers2022gaussian,leung2025learning, zaspel2024data, greydanus2019hamiltonian, duong2024port, desai2021port}.
However, in safety-critical control, the two sides have remained largely disconnected.
Energy-aware barrier constructions assume~$H$ is known and exploit it to restore relative degree one~\cite{singletary2021safety,califano2024effect}, 
while learning-based control barrier function (CBF) methods treat the full drift as an unstructured unknown~\cite{jagtap2020control,long2022safe,dhiman2021control}, spreading uncertainty across the entire vector field even when physical structure could concentrate it in a selected number of scalar energy functions.

The port-transversal barrier framework~\cite{leung2026port} 
takes a first step toward bridging this gap.
It shows that the interconnection topology 
determines which local energy terms must enter a reshaped barrier to restore relative degree one, and under what structural conditions the resulting CBF inequality is feasible assuming bounded inputs.
The classical mechanical energy-aware barrier~\cite{singletary2021safety} emerges as a special case of the port-transversal barrier framework.
However, the barrier cannot be evaluated numerically without the local energy storage functions. 
Since these are scalar functions of the Hamiltonian, they remain unknown until~$H$ is identified.
The central question driving this work is:
\begin{quote}
    \emph{How can we ensure safety when the Hamiltonian of a dynamic system is learned from data?}
\end{quote}

To answer the question, we turn our attention to the structural separation that port-Hamiltonian modeling provides.
The port-transversal barrier determines the barrier's functional form from the known topology, identifying all safety-relevant model dependence to a minimal set of scalar energy functions at input-carrying compartments.
This identification enables a two-stage Bayesian treatment in which barrier evaluation and drift prediction are addressed through separate credible sets.
First, posterior credible bands on the blanket energy storages produce port-transversal Bayesian (ptB) barriers whose safe sets are high-probability inner approximations of the true allowable set with credibility $1 - \eta_{\mathrm{ptB}}$.
Then, a drift credible ellipsoid separately accounts for vector field uncertainty in the CBF inequality with credibility $1 - \eta_{\mathrm{dr}}$. 
Since the guarantee combines the two events via a union bound, the end-to-end safety probability is at least $1 - (\eta_{\mathrm{dr}} + \eta_{\mathrm{ptB}})$, with each budget tunable at design time.
Finally, the framework is agnostic to the choice of learner. 
It requires only calibrated credible sets on the blanket storages and the drift, which we instantiate here via Gaussian process regression on the Hamiltonian.

\subsection*{Notation}
The real number line is denoted as \(\mathbb{R}\). 
Vectors in \(\mathbb{R}^n\) are column vectors. 
$[A]_{ij}$ denotes the $i,j$ entry of a matrix $A$.
For any truth function $B(x)$, the indicator function $\mathbb I_{B(x)}=1$, if $B(x)$ returns true, and $0$, otherwise.
$C^1(\mathcal{D})$ denotes the class of continuously differentiable functions in the domain $\mathcal{D}\subseteq \mathbb{R}^n$.
The gradient of a scalar function $H$ with respect to $x$ is denoted as $\nabla_x H$.
We write \(\mathbb E[\cdot]\) for expectation.
The normal and Gaussian process distributions are denoted as $\mathcal{N}(\cdot, \cdot)$ and $\mathcal{GP}(\cdot, \cdot)$, respectively.
For a set $\mathcal S$, its interior is denoted by $\operatorname{int}(\mathcal S)$ and its boundary is denoted by $\partial \mathcal S$.

\section{Preliminaries}
\label{sec:preliminaries}

We recall the port-Hamiltonian system model, the structural
assumptions under which the port-transversal barrier framework
operates, and the key results from~\cite{leung2026port} 
that the present work builds upon.

\subsection{Port-Hamiltonian Systems and Structural Assumptions}

We consider a port-Hamiltonian system:
\begin{equation}\label{eq:true-dynamics}
    \dot{x}
    = \underbrace{\bigl[J(x) - R(x)\bigr]\nabla H(x)}_{f(x)}
      + G(x)\,u,
\end{equation}
with state $x \in \mathcal{D} \subset \mathbb{R}^n$,
input $u \in \mathcal{U} \subset \mathbb{R}^m$,
where the interconnection matrix $J(x) = -J(x)^\top$,
the dissipation matrix $R(x) = R(x)^\top \succeq 0$,
and the input map~$G(x)$ are known from the physical network
topology.
The Hamiltonian~$H : \mathcal{D} \to \mathbb{R}$ encodes the
system's stored energy.
We write $A(x) \coloneqq J(x) - R(x)$ for the combined drift
matrix.
The state decomposes into $N$ \emph{energy compartments}
$x = (x_1, \dots, x_N)$ with $x_i \in \mathbb{R}^{n_i}$ and
$\sum_{i} n_i = n$, reflecting the physical subsystems
(inertias, compliances, capacitances, etc.) constituting the
network.

\begin{assumption}[Structural regularity~{\cite[Asm.~1]{leung2026port}}]%
\label{ass:structural}
  The following conditions hold for system~\eqref{eq:true-dynamics}:
  \begin{enumerate}[(a)]
    \item \label{ass:separable}
      \emph{Separable storage.}\;
      $H(x) = \sum_{i=1}^{N} H_i(x_i)$.

    \item \label{ass:convexity}
      \emph{Convexity.}\;
      Each $H_i : \mathbb{R}^{n_i} \to \mathbb{R}$ is $C^2$
      and strictly convex with a unique minimizer~$x_i^\star$.

    \item \label{ass:local-dep}
      \emph{Local dependence of structure.}\;
      For all compartment indices $i,j$ and all
      $k \notin \{i,j\}$,
      $\frac{\partial A_{ij}}{\partial x_k} \equiv 0$
      and
      $\frac{\partial G_i}{\partial x_k} \equiv 0$.
  \end{enumerate}
\end{assumption}

Separability is standard for port-Hamiltonian systems derived
from bond-graph or Dirac-structure
constructions~\cite[Ch.~2]{duindam2009modeling}.
Convexity ensures each storage element has a unique
energy-minimizing equilibrium.
Local dependence excludes artificial long-range couplings under
differentiation which holds whenever $J$, $R$, $G$ are constant.

\subsection{Influence Graph and Node Sets}
As shown in~\cite{leung2026port}, the sparsity patterns of
$J$, $R$, and $G$ can be exploited for safety filter design.
The influence graph organizes these patterns for graph theoretic analysis.
\label{sec:port-transversality}
\begin{definition}[Influence graph~{\cite[Def.~2]{leung2026port}}]%
\label{def:influence-graph}
  The \emph{influence graph}
  $\mathcal{G}(A) = (\mathcal{V}, \mathcal{E}_A)$ has node set
  $\mathcal{V} = \{1,\dots,N\}$ and edge set
  $\mathcal{E}_A = \bigl\{\{i,j\} : A_{ij}(x) \not\equiv 0\bigr\}$.
\end{definition}

Given a smooth safety specification
$\varphi : \mathcal{D} \to \mathbb{R}$ defining the allowable set
$\mathcal{A} = \{x : \varphi(x) \geq 0\}$ with
$\nabla\varphi \neq 0$ on~$\partial\mathcal{A}$, we identify two
distinguished subsets of~$\mathcal{V}$:
\begin{align}
  \text{\emph{Port nodes:}} \quad
  \mathcal{P}
  &= \bigl\{i \in \mathcal{V} : G_i \not\equiv 0 \bigr\},
  \label{eq:port-nodes} \\
  \text{\emph{Barrier nodes:}} \quad
  \mathcal{B}
  &= \bigl\{i \in \mathcal{V} :
     \nabla_{x_i}\varphi \not\equiv 0 \bigr\},
  \label{eq:barrier-nodes}
\end{align}
where $G_i$ denotes the block of rows of~$G$ corresponding to
compartment~$i$.

\begin{definition}[Barrier-insulating blanket~{\cite[Def.~5]{leung2026port}}]%
\label{def:insulating-blanket}
  The \emph{barrier-insulating blanket} is the vertex boundary
  of~$\mathcal{B}$ in $\mathcal{G}(A)$:
  \begin{equation}\label{eq:blanket}
    \partial\mathcal{B}
    \coloneqq
    \bigl\{
      j \in \mathcal{V}\setminus\mathcal{B}
      : \exists\, i \in \mathcal{B}
        \text{ with } \{i,j\} \in \mathcal{E}_A
    \bigr\}.
  \end{equation}
  The barrier is \emph{port-insulated} if
  $\partial\mathcal{B} \subseteq \mathcal{P}$.
\end{definition}

Port insulation means that every graph neighbor of the
constrained compartments carries an input port.
As shown in~\cite{leung2026port}, 
this condition is necessary for the
port-transversal barrier to achieve CBF feasibility under
bounded inputs.

\subsection{Port-Transversal Barrier}
With the blanket $\partial\mathcal{B}$ identified and port-insulation verified, a specification $\varphi$ can be reshaped into a barrier of relative degree one by subtracting the shifted local storages of the blanket port nodes.
\begin{definition}[Port-transversal barrier~{\cite[Def.~6]{leung2026port}}]%
\label{def:pt-barrier}
  Let $\varphi$ have barrier nodes $\mathcal{B}$ with
  port-insulated blanket
  $\partial\mathcal{B} \subseteq \mathcal{P}$.
  For weights $\beta_j > 0$ and shaping parameter $\gamma > 0$,
  the \emph{port-transversal barrier} is:
  \begin{equation}\label{eq:pt-barrier-blanket}
    h_\gamma(x)
    \coloneqq
    \varphi(x)
    - \frac{1}{\gamma}
      \sum_{j \in \partial\mathcal{B} \cap \mathcal{P}}
      \beta_j\,\bar{H}_j(x_j),
  \end{equation}
  where
  $\bar{H}_j(x_j) \coloneqq H_j(x_j) - H_j(x_j^\star) \geq 0$
  is the shifted local storage at the port node~$j$.
\end{definition}

The barrier induces the safe set
$\mathcal{S}_\gamma \coloneqq \{x \in \mathcal{D} : h_\gamma(x) \geq 0\}$.
The following lemma collects the key synthesis properties.

\begin{lemma}[Port-transversal barrier
  synthesis~{\cite[Lem.~2]{leung2026port}}]%
\label{lem:pt-synthesis}
  Under Assumption~\ref{ass:structural} with
  $\partial\mathcal{B} \cap \mathcal{P} \neq \emptyset$:
  \begin{enumerate}[\upshape(i)]
    \item \label{item:pt-restored}
      \emph{Port-transversality.}\;
      The input Lie derivative is:
      \begin{equation}\label{eq:Lg-hgamma}
        L_G h_\gamma(x)
        = -\frac{1}{\gamma}
          \sum_{j \in \partial\mathcal{B} \cap \mathcal{P}}
          \beta_j\,\nabla H_j(x_j)^\top G_j(x),
      \end{equation}
      which is nonzero for all
      $x \in \partial\mathcal{S}_\gamma \setminus
      \mathcal{Z}_{\partial\mathcal{B}}$,
      where
      $\mathcal{Z}_{\partial\mathcal{B}}
      \coloneqq \{x : L_G h_\gamma(x) = 0\}$
      is the blanket degeneracy set.
      In particular, $h_\gamma$ has relative degree one on
      $\mathcal{D}\setminus\mathcal{Z}_{\partial\mathcal{B}}$.

    \item \label{item:pt-safe-set}
      \emph{Safe-set inclusion.}\;
      $\mathcal{S}_\gamma \subseteq \mathcal{A}$,
      with
      $\mathcal{S}_{\gamma_1} \subseteq \mathcal{S}_{\gamma_2}$
      for $\gamma_1 \leq \gamma_2$, and
      $\operatorname{int}(\mathcal{A})
      \subseteq \bigcup_{\gamma>0}\mathcal{S}_\gamma
      \subseteq \mathcal{A}$.
  \end{enumerate}
\end{lemma}

Equation~\eqref{eq:Lg-hgamma} reveals that the input sensitivity
of the reshaped barrier depends on the blanket efforts
$\nabla H_j(x_j)$, which vanish only at the energy
equilibria~$x_j^\star$.
The degeneracy set
$\mathcal{Z}_{\partial\mathcal{B}}$ is where the
port-transversal barrier loses first-order input authority.
Managing feasibility near this set is a central concern of~\cite{leung2026port} 
and is addressed in 
Appendix~\ref{sec:cbf-feasibility}.

\section{Problem Formulation}
\label{subsec:problem}

We consider the port-Hamiltonian system \eqref{eq:true-dynamics}
where the interconnection and dissipation matrices $J, R$ and the input map~$G$ are known from the physical network topology, while the Hamiltonian~$H^\dagger$, and hence the drift
$f^\dagger(x) \coloneqq [J(x) - R(x)]\nabla H^\dagger(x)$, is unknown.

\vspace{0.5em}
\noindent\textit{Allowable set.}
The user specifies $c \geq 1$ safety specifications
$\{\varphi_s\}_{s=1}^c$, 
defining the allowable set:
\[
    \mathcal{A}
    \coloneqq
    \bigcap_{s=1}^{c}
    \bigl\{x \in \mathcal{D} : \varphi_s(x) \geq 0\bigr\},
\]
encoding safety specifications such as kinematic constraints,
bounds on compartmental energy, or combinations thereof (see Example~\ref{ex:composite}).
Since specifications $\varphi_s$ may depend on $H$, the allowable set itself is unknown
until $H$ is identified.

\vspace{0.5em}
\noindent\textit{Bayesian model from data.}
We are given dynamic measurements
$\mathfrak{D} = \{(\tilde{x}(t_k), u(t_k))\}_{k=1}^K$,
which can be instantiated as time-stamped
state--derivative--input tuples as
in~\cite{beckers2022gaussian}, or irregularly sampled noisy state
observations as in~\cite{leung2025learning}.
A Bayesian identification procedure, e.g., GP regression with a
port-Hamiltonian kernel, maps $\mathfrak{D}$ to a posterior
$\pi(\cdot \mid \mathfrak{D})$ on a model class $\Theta$,
inducing a family of candidate Hamiltonians
$\{H_\theta\}_{\theta \in \Theta}$, the corresponding drifts
$\{f_\theta\}_{\theta \in \Theta}$, and closed-loop dynamics:
\begin{equation}\label{eq:theta-closed-loop}
    \dot{x} = f_\theta(x) + G(x)\,u(x).
\end{equation}
We interpret the true Hamiltonian $H$ as indexed by an unknown
$\theta^\dagger \in \Theta$ and quantify uncertainty via
the posterior $\pi(\cdot \mid \mathfrak{D})$ and credible sets
$\Theta_\eta \subset \Theta$ as in
Appendix~\ref{sec:bayesian-cbf-theory}.

\vspace{0.5em}
\noindent\textit{ptB-CBF objective.}
Given $\pi(\cdot\! \mid\! \mathfrak{D})$ and credibility levels
$\eta_{\mathrm{dr}}$ and $\eta_{\mathrm{ptB}}$, our objectives are:
\begin{enumerate}[(P1)]

    \item \label{problem:b-barrier}
      \emph{Bayesian barrier design.}\;
      Design port-transversal Bayesian barriers
      $\{h_{\mathrm{ptB},s}\}_{s=1}^c$ whose joint safe set
      $\mathcal{S}_{\mathrm{ptB}}
      = \bigcap_{s=1}^c \{x : h_{\mathrm{ptB},s}(x) \geq 0\}$
      is, with posterior probability at least
      $1 - \eta_{\mathrm{ptB}}$, an inner approximation of the
      true allowable set~$\mathcal{A}$;

    \item \label{problem:vf-guarantee}
      \emph{Safety filter and guarantees.}\;
      Design a safety filter $u^*(x)$ under unknown system drift, such that
      $\mathcal{S}_{\mathrm{ptB}}$ is forward invariant with probability at least $(1 - \eta_{\mathrm{dr}})$,
      yielding an overall safety guarantee of at least
      $1 - (\eta_{\mathrm{dr}} + \eta_{\mathrm{ptB}})$.
\end{enumerate}
\noindent
Problems~(P\ref{problem:b-barrier})
and~(P\ref{problem:vf-guarantee}) are addressed by the Bayesian
constructions of Sections~\ref{sec:posterior-energy-sections}
and~\ref{sec:B-safety-v-ebcbf}, with independent uncertainty
budgets $\eta_{\mathrm{ptB}}$ and $\eta_{\mathrm{dr}}$, respectively.

\section{Port-Transversal Bayesian CBFs} \label{sec:ptB-cbf}
We present the construction of port-transversal Bayesian control barrier functions (ptB-CBFs) here.

\subsection{Port-Transversal Bayesian Barriers}\label{sec:ptB-barriers}

To start, we construct Bayesian counterparts of the port-transversal barrier that account for the unknown Hamiltonian.

\subsubsection{Vector-Valued Barrier Specifications}%
\label{sec:barrier-specs}

We first extend the scalar port-transversal barrier of Definition~\ref{def:pt-barrier} to handle multiple safety specifications simultaneously.

\begin{definition}[Vector-valued port-transversal barrier]%
\label{def:vector-pt-barrier}
Given safety specifications $\varphi_1, \dots, \varphi_c$ on~$\mathcal{D}$, shaping parameters $\gamma_s > 0$, and weights $\beta_{s,j} \geq 0$, $\forall s \leq c$, the \emph{vector-valued port-transversal barrier} $h : \mathcal{D} \to \mathbb{R}^c$ has components:
\begin{equation}\label{eq:vector-pt-barrier}
  h_s(x) \coloneqq \varphi_s(x) - \frac{1}{\gamma_s} \sum_{j \in \partial\mathcal{B} \cap \mathcal{P}} \beta_{s,j}\,\bar{H}_j(x_j), \qquad s = 1,\dots,c.
\end{equation}
The induced safe set is
$\mathcal{S} \coloneqq \bigcap_{s=1}^c \{x \in \mathcal{D} : h_s(x) \geq 0\}.$
\end{definition}

Since $\bar{H}_j \geq 0$ and $\beta_{s,j} \geq 0$, we have $h_s \leq \varphi_s$ pointwise and hence $\mathcal{S} \subseteq \mathcal{A}$. 
Port-transversality of $h_s$ follows from Lemma~\ref{lem:pt-synthesis}\,(\ref{item:pt-restored}) whenever $\beta_{s,j} > 0$ for some $j \in \partial\mathcal{B} \cap \mathcal{P}$. 
For mechanical systems with configuration-dependent inertia, the Hamiltonian is not separable in canonical coordinates. 
Corollary~\ref{cor:mechanical} 
restores separability via the screw-momenta lifting, after which Definition~\ref{def:vector-pt-barrier} applies with the kinetic energy $T(q,p)$ as the unique shifted blanket storage.

\begin{example}[Composite energy--kinematic specification]%
\label{ex:composite}
For a fully actuated mechanical system with $\mathcal{B} = \{q\}$ and $\mathcal{P} = \{p\}$, 
Corollary~\ref{cor:mechanical} 
identifies the kinetic energy $T(q,p)$ as the unique shifted blanket storage, even though the Hamiltonian is not separable in canonical coordinates when $M(q)$ depends on configuration. The following four specifications are captured by~\eqref{eq:vector-pt-barrier}:
\begin{equation*}
\begin{aligned}
  h_q &= \bar{h}_q(q) - \tfrac{1}{\gamma}\,T(q,p),
    && \text{with}\ \ \varphi_q = \bar{h}_q(q),\;\; \beta_{q,p} = 1,
  \\
  h_T &= \bar{T} - T(q,p),
    && \text{with}\ \ \varphi_T = \bar{T} - T,\;\; \beta_{T,p} = 0,
  \\
  h_V &= \bar{h}_V(q) - \tfrac{1}{\gamma}\,T(q,p),
    && \text{with}\ \ \varphi_V = \bar{h}_V(q),\;\; \beta_{V,p} = 1,
  \\
  h_H &= \bar{H} - H(q,p),
    && \text{with}\ \ \varphi_H = \bar{H} - H,\;\; \beta_{H,p} = 0.
\end{aligned}
\end{equation*}
The kinematic barriers $h_q$ and $h_V$ require reshaping, i.e.,  $\beta_s > 0$.
In contrast, the energy barriers $h_T$ and $h_H$ are already port-transversal through their dependence on $T$, i.e., $\beta_s = 0$. 
Port transversality of each component follows from 
Corollary~\ref{cor:mechanical} 
and Lemma~\ref{lem:pt-synthesis}\,(\ref{item:pt-restored}). 
For the Bayesian development, all four barriers share the same uncertain quantity---the kinetic energy $T(q,p)$---so a single credible band on $T$ suffices to construct the Bayesian counterpart of the entire vector barrier.
\end{example}

We now turn our attention to constructing posterior predictions for these quantities and use them to produce conservative Bayesian barriers.

\subsubsection{Posterior Energy Storages}
\label{sec:posterior-energy-sections}

Recall the shifted local storages defined in~\eqref{eq:pt-barrier-blanket}.
Under Assumption~\ref{ass:structural}(\ref{ass:separable}), each
storage $H_{j}$ is a function of~$x_{j}$ alone.
When the learning algorithm respects this separability---e.g., through an additive kernel
$k(x,x') = \sum_{i} k_i(x_i, x_i')$---each compartment inherits
its own scalar GP posterior:
  $H_{*,j}(x_{j})
  \sim \mathcal{N}\bigl(
    \mu_{H_{j}}(x_{j}),\;
    \sigma_{H_{j}}^2(x_{j})
  \bigr),$
with $\mu_{H_{j}}, \sigma_{H_{j}}$ obtained from the upstream
learning algorithm, e.g., GP-PHS~\cite{beckers2022gaussian} or
MS-PHS~\cite{leung2025learning}.
Since $x_{j}^\star$ is fixed and known, the shifted storage
$\bar{H}_{j} = H_{j} - H_{j}(x_{j}^\star)$
is itself Gaussian with:
\begin{equation}\label{eq:Ebar-posterior}
  \bar{H}_{*,j}(x_{j})
  \sim \mathcal{N}\bigl(
    \mu_{\bar{H}_{j}}(x_{j}),\;
    \sigma_{\bar{H}_{j}}^2(x_{j})
  \bigr),
\end{equation}
where the mean and variance follow from standard GP conditioning on the compartment GP for $H_{j}$.
If we follow the common convention of normalizing the equilibrium energy so that
$H_{j}(x_{j}^\star) = 0$ 
, then
$\mu_{\bar{H}_{j}} = \mu_{H_{j}}$ and
$\sigma_{\bar{H}_{j}} = \sigma_{H_{j}}$
directly.

\begin{remark}[Mechanical systems and posterior kinetic energy]%
  \label{rem:mechanical-posterior}
  For an open-chain rigid-body mechanism, the canonical PHS form
  and its separable Hamiltonian in screw-momenta coordinates are
  established in~\cite[Ch.~2.3, 3.2]{duindam2009modeling}.
  By 
  Corollary~\ref{cor:mechanical},
  the influence graph has
  $\mathcal{B} = \{q\}$, $\mathcal{P} = \{p\}$, $d = 1$, and the
  unique shifted storage is the kinetic energy
  $\bar{H}_p(x) = T(q,p)$.
  In canonical coordinates $(q,p)$, however, $H(q,p)$ is
  \emph{not} separable when the mass matrix $M(q)$ depends on
  configuration $q$.
  The kinetic energy $T(q,p) = H(q,p) - H(q,0)$ must then be
  computed as the difference of two correlated evaluations of a joint GP over the full state, yielding:
  
    \vspace{-3.5ex}
  {\small
  \begin{align}
    \mu_T(q,p) &= \mu_H(q,p) - \mu_H(q,0),
    \label{eq:mu-T}\\[2pt]
    \sigma_T^2(q,p) &= \sigma_H^2(q,p) + \sigma_H^2(q,0)
      - 2\,\mathrm{Cov}\bigl[H_*(q,p),\, H_*(q,0)\bigr].
    \label{eq:sigma-T}
  \end{align}}
  \hspace{-.275em}Equations~\eqref{eq:mu-T}-\eqref{eq:sigma-T} recover the posterior kinetic energy 
  in Section~\ref{sec:app:mech-inst}
  as a special case.
  No additional mechanical modeling assumptions beyond
  Assumption~\ref{ass:structural} are required.
\end{remark}

\subsubsection{Bayesian Barrier Construction}%
\label{sec:bayesian-barrier-construction}
Each scalar component of the vector-valued port-transversal barrier $h_s$ from Definition~\ref{def:vector-pt-barrier} is an instance of the port-transversal form $h_\gamma$ in~\eqref{eq:pt-barrier-blanket}, with a known functional structure but unknown shifted storages $\{\bar{H}_j(x_j)\}_{j \in \partial\mathcal{B} \cap \mathcal{P}}$.
Let $\mu_{\bar{H}_{j}}, \sigma_{\bar{H}_{j}}$ denote the posterior mean and standard deviation of $\bar{H}_{j}$ from \eqref{eq:Ebar-posterior}.

\begin{assumption}[Energy credible band]%
  \label{ass:energy-credible-band}
  For each shifted storage $\bar{H}_{j}$ in the blanket $\partial \mathcal B \cap \mathcal P$,
  there exist $\beta_{\mathrm{ptB}} > 0$ and
  $\eta_{\mathrm{ptB}} \in (0,1)$ such that:
  {\small
  \begin{equation}\label{eq:energy-band}
  \begin{aligned}
      \pi\Bigl(
      \theta :
      |\bar{H}_{j,\theta}(x_j) - \mu_{\bar{H}_{j}}(x_j)|
      \leq \beta_{\mathrm{ptB}}\, \sigma_{\bar{H}_{j}}(x_j),
    \Bigm| \mathfrak{D}
    \Bigr)
    \geq 1 - \eta_{\mathrm{ptB}},
  \end{aligned}
  \end{equation}}
  for each $j \in \partial \mathcal B \cap \mathcal P$.
\end{assumption}

While the existence of a suitable $\beta_{\mathrm{ptB}}$ depends on
the upstream learning algorithm, it is typically straightforward to
verify once the algorithm is fixed.
In the GP-PHS/MS-PHS setting, a pointwise sub-Gaussian
concentration bound yields
$\beta_{\mathrm{ptB}} = \sqrt{2\ln(1/\eta)}$;
combining this with a finite-cover argument over~$\mathcal{D}$
gives the uniform band in
Assumption~\ref{ass:energy-credible-band}.
For each state $x$, define the credible box over the blanket storages:
\begin{equation}\label{eq:credible-box}
\begin{aligned}
  \mathcal{I}(x)
  \coloneqq
  \prod_{j \in \partial\mathcal{B} \cap \mathcal{P}}
  \bigl[
    \mu_{\bar{H}_{j}}(x_{j})
      - \beta_{\mathrm{ptB}}\,\sigma_{\bar{H}_{j}}(x_{j}),\;\\
    \mu_{\bar{H}_{j}}(x_{j})
      + \beta_{\mathrm{ptB}}\,\sigma_{\bar{H}_{j}}(x_{j})
  \bigr].
\end{aligned}
\end{equation}
With the additive storage structure mentioned in
Section~\ref{sec:posterior-energy-sections}, the compartment posteriors
are independent, so the credible box $\mathcal{I}(x)$ is exactly the joint credible region rather than a conservative outer approximation.
We write $h_s(x_{\mathcal{B}}; E)$ to denote the barrier~\eqref{eq:vector-pt-barrier} with the shifted storages replaced by candidate values $E \in \mathbb{R}^{|\partial\mathcal{B} \cap \mathcal{P}|}$.
Now, we define that port-transversal Bayesian barrier as follows.

\begin{definition}[Port-transversal Bayesian barrier]%
  \label{def:h-eb}
  Let $h$ be a vector-valued port-transversal barrier as in
  Definition~\ref{def:vector-pt-barrier}.
  The \emph{port-transversal Bayesian barrier} is $h_{\mathrm{ptB}}$ with components:
  \begin{equation}\label{eq:heb-def}
    h_{\mathrm{ptB},s}(x)
    \coloneqq
    \inf_{E \in \mathcal{I}(x)}
      h_s(x;\, E),
    \qquad s = 1, \dots, c,
  \end{equation}
  with joint design safe set
  $\mathcal{S}_{\mathrm{ptB}}
  \coloneqq \bigcap_{s=1}^c \{x : h_{\mathrm{ptB},s}(x) \geq 0\}$.
\end{definition}

Let $h_{s,\theta}(x) \coloneqq h_s(x_{\mathcal{B}};\,\{\bar{H}_j(x_j)\}_{j \in \partial\mathcal{B}
    \cap \mathcal{P}})$
denote the $s$-th port-transversal
barrier evaluated at the unknown shifted storages.
Therefore, $E$ in \eqref{eq:heb-def} denotes a candidate of the unknown shifted storages $\{\bar{H}_j(x_j)\}_{j \in \partial\mathcal{B} \cap \mathcal{P}}$ drawn from $\mathcal I(x)$.

\begin{lemma}[ptB barrier credible dominance]%
  \label{lem:pb-cred-dominance}
  Suppose Assumption~\ref{ass:energy-credible-band} holds.
  Then, with posterior probability at least
  $1 - \eta_{\mathrm{ptB}}$,
  \begin{equation}\label{eq:heb-dominance}
    h_{s,\theta}(x) \geq h_{\mathrm{ptB},s}(x)
    \qquad \forall x \in \mathcal{D},
    \quad \forall\, s = 1, \dots, c.
  \end{equation}
  In particular,
  $\mathcal{S}_{\mathrm{ptB}} \subseteq
  \mathcal{S}_\theta \coloneqq
  \bigcap_{s=1}^c \{x : h_{s,\theta}(x) \geq 0\}$.
\end{lemma}

\begin{proof}
  By Assumption~\ref{ass:energy-credible-band}, with probability
  at least $1 - \eta_{\mathrm{ptB}}$, the unknown shifted storages
  $\{\bar{H}_j(x_j)\}_{j \in \partial\mathcal{B}
    \cap \mathcal{P}}$
  belong to
  $\mathcal{I}(x)$ for all $x \in \mathcal{D}$.
  On this event, for each $s = 1, \dots, c$:
  \[
    h_{s,\theta}(x)
    = h_s(x;\, \bar{H}_{s,\theta})
    \geq
    \inf_{E \in \mathcal{I}(x)}
      h_s(x;\, E)
    = h_{\mathrm{ptB},s}(x),
  \]
  where the inequality holds because the unknown energy lies in the feasible set of the infimum.
  The set inclusion follows: $h_{\mathrm{ptB},s}(x) \geq 0$ for all~$s$ implies $h_{s,\theta}(x) \geq 0$ for all~$s \leq c$.
\end{proof}


Since each barrier component \(h_s(x;E)\) in \eqref{eq:vector-pt-barrier} is affine in the shifted storages \(E\), the infimum in \eqref{eq:heb-def} over the credible box \(\mathcal{I}(x)\) is attained at a vertex, and the computational cost per component is a single function evaluation.
In general, when the native Hamiltonian is not separable, and a lifting is used to restore separability, the pulled-back storages may become posterior-correlated, in which case the credible box should be replaced with an ellipsoidal credible region and the vertex evaluation with a trust-region subproblem analogous to \eqref{eq:Phi-lower-qp}.
For mechanical systems, this issue does not arise: the blanket path length is \(d=1\), so there is a single shifted storage \(T(q,p)\) and no inter-compartment coupling.

\subsection{Safety Guarantees via ptB-CBF}
With the port-transversal Bayesian barrier $h_{\mathrm{ptB}}$ and its credible dominance in hand, it remains to account for uncertainty in the drift and assemble a safety filter with end-to-end probabilistic guarantees.

\subsubsection{Vector Field Posterior and Gradient Credible Ellipsoid} \label{sec:vf-gp-n-grad-cred-ellip}

From the learning stage implied in Section~\ref{subsec:problem}, we model the unknown drift $f^\dagger \eqqcolon f_\theta$ in~\eqref{eq:true-dynamics} with a posterior distribution
whose mean $\mu_f(x)\in\mathbb{R}^n$ and covariance
$\Sigma_f(x)\in\mathbb{R}^{n\times n}$ at each $x$.

\begin{assumption}[Vector field credible ellipsoid]
    \label{ass:drift-ellipsoid}
    There exist $\beta_f>0$ and $\eta_{\rm dr}\in(0,1)$ such
    that the state-dependent ellipsoid:
    \begin{equation}
        \mathcal F_{\eta}(x)
        \coloneqq
        \Bigl\{
            v\in\mathbb{R}^n :
            (v-\mu_f(x))^\top
            \Sigma_f(x)^{-1} (v-\mu_f(x))
            \le \beta_f^2
        \Bigr\}
        \label{eq:drift-ellipsoid}
    \end{equation}
    satisfies:
    \begin{equation}
        \pi\Bigl(
            \theta :
            f_\theta(x)\in\mathcal F_\eta(x)\
            \forall x\in\mathcal D
            \Bigm|
            \mathfrak D
        \Bigr)
        \ge 1-\eta_{\rm dr}.
        \label{eq:drift-ellipsoid-prob}
    \end{equation}
\end{assumption}
While the choice of a $\beta_f$ satisfying Assumption~\ref{ass:drift-ellipsoid} depends on the upstream learning algorithm, it is typically straightforward to verify once the algorithm is fixed.
\begin{remark}
    In the GP-PHS/MS-PHS GP setting, the port-Hamiltonian drift $f_\theta(x)$
    is modeled by a vector-valued GP with posterior mean $\mu_f(x)$ and
    covariance $\Sigma_f(x)$, induced by the PHS kernel
    $k_{\mathrm{phs}}$. 
    For each fixed $x$, the weighted 2-norm
    $(f_\theta(x)-\mu_f(x))^\top\Sigma_f(x)^{-1}(f_\theta(x)-\mu_f(x))$
    is $\chi^2$-sub-Gaussian, so choosing
    $\beta_f^2 = 2\ln(1/\eta_{\rm dr})$ gives
    $\pi(f_\theta(x)\in\mathcal F_\eta(x)\mid\mathfrak D)\ge 1-\eta_{\rm dr}$.
    Using GP sample-path continuity or a union bound over a finite cover of
    $\mathcal D$, we obtain a uniform-in-$x$ ellipsoidal credible set
    satisfying Assumption~\ref{ass:drift-ellipsoid} for a desired
    credibility level $1-\eta_{\rm dr}$.
\end{remark}

Before diving into the derivation of a drift-side lower bound, we introduce the notion of drift credible model set that will set the stage for discussion in Section~\ref{sec:B-safety-v-ebcbf}.
\begin{definition}[Drift credible model set induced by an ellipsoidal band]
\label{def:Theta-dyn}
Given the state-dependent ellipsoids $\{\mathcal F_\eta(x)\}_{x\in\mathcal D}$ in \eqref{eq:drift-ellipsoid}, define the induced credible model set:
\begin{equation}
    \Theta_{\rm dr}
    \coloneqq
    \Bigl\{\theta\in\Theta:\ f_\theta(x)\in\mathcal F_\eta(x)\ \ \forall x\in\mathcal D \Bigr\}.
    \label{eq:Theta-dyn}
\end{equation}
\end{definition}
Therefore, Assumption~\ref{ass:drift-ellipsoid} implies a drift credible model set $\Theta_{\rm dr}$ such that:
\begin{equation}\label{eq:drift-induced-cred-model-set}
    \pi(\Theta_{\mathrm{dr}}\mid \mathfrak{D}) \geq 1 - \eta_{\mathrm{dr}}.
\end{equation}
Next, we are interested in deriving a high-probability lower bound on the drift-side barrier term.
For a fixed barrier $h_{\mathrm{ptB}}: \mathcal{D} \to \mathbb{R}$, i.e., $c=1$, define the drift-side
CBF term:
\begin{equation}
\begin{aligned}
    \Phi_\theta(x)
    \coloneqq\ &
    L_{f_\theta} h_{\mathrm{ptB}}(x)
    + \alpha\bigl(h_{\mathrm{ptB}}(x)\bigr)\\
    =\ &
    \nabla h_{\mathrm{ptB}}(x)^\top f_\theta(x)
    + \alpha\bigl(h_{\mathrm{ptB}}(x)\bigr),
\end{aligned}
    \label{eq:Phi-theta-def}
\end{equation}
for some extended class-$\mathcal{K}_\infty$ function $\alpha$.
For later use, we isolate a purely state-dependent lower bound.

\begin{definition}[Drift-side ptB-CBF lower bound]
    \label{def:Phi-lower}
    For each $x\in\mathcal D$, define:
    \begin{equation}
        \underline\Phi(x)
        \coloneqq
        \inf_{v\in\mathcal F_\eta(x)}
        \Bigl[
            \nabla h_{\mathrm{ptB}}(x)^\top v
            + \alpha\bigl(h_{\mathrm{ptB}}(x)\bigr)
        \Bigr].
        \label{eq:Phi-lower-qp}
    \end{equation}
\end{definition}

The optimization problem~\eqref{eq:Phi-lower-qp} is a convex
quadratic program (QP) in $v$ with an ellipsoidal constraint.
Writing $v = \mu_f(x) + \delta v$ and using the change of
variables $w \coloneqq \Sigma_f(x)^{-1/2}\delta v$, the ellipsoid constraint in \eqref{eq:drift-ellipsoid}
becomes $\|w\|_2^2\le\beta_f^2$ and the objective in \eqref{eq:Phi-lower-qp} is linear in
$w$:
\begin{align*}
    \underline\Phi(x) &=
        \inf_{\|w\|_2 \leq \beta_f}
        \big[
            \nabla h_{\mathrm{ptB}}(x)^\top (\mu_f + \Sigma_f^{1/2}(x)w)
            + \alpha\bigl(h_{\mathrm{ptB}}(x))
        \big]\\
        &= \nabla h_{\mathrm{ptB}}(x)^\top \mu_f(x)+ \alpha\bigl(h_{\mathrm{ptB}}(x)\bigr) + z^*,
\end{align*}
where $z^* \coloneqq \inf_{\|w\|_2 \leq \beta_f} (\Sigma^{1/2}(x)\nabla h_{\mathrm{ptB}}(x))^\top w$.
Since $\inf_{\|w\|_2 \leq \beta_f} a^\top w$ has a closed-form solution $-\beta_f\|a\|_2$ for any fixed $a$, the QP~\eqref{eq:Phi-lower-qp} also has the closed-form solution:
\begin{equation}
\begin{aligned}
    \underline\Phi(x)
    =\ &
    \nabla h_{\mathrm{ptB}}(x)^\top \mu_f(x)
    + \alpha\bigl(h_{\mathrm{ptB}}(x)\bigr)\\
    &- \beta_f
      \bigl\|\Sigma_f(x)^{1/2}\nabla h_{\mathrm{ptB}}(x)\bigr\|_2.
\end{aligned}
    \label{eq:Phi-lower-closed-form}
\end{equation}
Notice that computing \(\underline{\Phi}(x)\) reduces to a trust-region subproblem over \(\mathcal F_\eta(x)\).
Finally, for $c > 1$, we define $\underline\Phi_s(x)$ by replacing $h_{\mathrm{ptB}}$ with $h_{\mathrm{ptB}, s}$, for $s = 1, \dots, c$, in~\eqref{eq:Phi-lower-closed-form}.

\begin{lemma}[High-probability lower bound on drift-side barrier term]
    \label{lem:Phi-lower-hp}
    Under Assumption~\ref{ass:drift-ellipsoid}, with posterior
    probability at least $1-\eta_{\rm dr}$,
    \begin{equation}
        \underline\Phi(x)
         \le 
        \Phi_\theta(x)
        \qquad
        \forall x\in\mathcal D,
        \ \forall \theta\in\Theta.
        \label{eq:Phi-lower-hp}
    \end{equation}
\end{lemma}

\begin{proof}
    On the event in~\eqref{eq:drift-ellipsoid-prob},
    $f_\theta(x)\in\mathcal F_\eta(x)$ for all $x$. By the
    definition of $\underline\Phi(x)$ as an infimum over
    $\mathcal F_\eta(x)$ we have, for each fixed $x$ and $\theta$,
    \begin{equation*}
        \begin{aligned}
            \underline\Phi(x)
            &=
            \inf_{v\in\mathcal F_\eta(x)}
            \bigl[
                \nabla h_{\mathrm{ptB}}(x)^\top v
                + \alpha(h_{\mathrm{ptB}}(x))
            \bigr]\\
            &\le 
            \nabla h_{\mathrm{ptB}}(x)^\top f_\theta(x)
            + \alpha(h_{\mathrm{ptB}}(x)) = \Phi_\theta(x),
        \end{aligned}
    \end{equation*}
    which yields~\eqref{eq:Phi-lower-hp}.
\end{proof}

\subsubsection{Bayesian Safety via ptB-CBFs} \label{sec:B-safety-v-ebcbf}

Combining~\eqref{eq:Phi-theta-def}
and~\eqref{eq:Phi-lower-hp}, a sufficient condition for the B-CBF inequality:
\begin{equation*}
        \underbrace{\nabla h_{\mathrm{ptB}, s}(x)^\top f_\theta(x)}_{L_{f_\theta} h_{\mathrm{ptB}, s}(x)}
          + \underbrace{\nabla h_{\mathrm{ptB}, s}(x)^\top G(x)}_{L_{G} h_{\mathrm{ptB}, s}(x)}u
        \ge  -\alpha(h_{\mathrm{ptB}, s}(x))
\end{equation*}
to hold for all $\theta$ in the credible set is the ptB-CBF constraint:
\begin{equation}
    \underline\Phi_s(x) + \nabla h_{\mathrm{ptB}, s}(x)^\top G(x)u  \ge  0.
    \label{eq:pb-cbf-constraint}
\end{equation}

Given a nominal controller $u_{\rm nom}:\mathcal D\to\mathcal U$, we define the ptB-CBF safety filter as the solution to the QP:
\begin{equation}
    \begin{aligned}
        u^*(x)
        ={}& \arg\min_{u\in\mathcal U}
            \|u - u_{\rm nom}(x)\|^2 \\
        &\; \text{s.t.}\;\;
            \underline\Phi_s(x)
            + \nabla h_{\mathrm{ptB}, s}(x)^\top G(x)u
             \ge  0,\; s = 1, \dots, c.
    \end{aligned}
    \label{eq:pb-cbf-qp}
\end{equation}
When $c=1$, the QP admits a closed-form solution $u^*$ obtained by
letting $[h_{\mathrm{ptB}},h_{\mathrm{ptB}}]_{GG^\top}(x)
\coloneqq \nabla h_{\mathrm{ptB}}(x)^\top G(x)G(x)^\top\nabla h_{\mathrm{ptB}}(x)\in \mathbb{R}^{c\times c}$
and assuming $[h_{\mathrm{ptB}},h_{\mathrm{ptB}}]_{GG^\top}(x)\neq 0$ whenever the constraint is active. 
Then, the solution to
\eqref{eq:pb-cbf-qp} admits the usual closed form~\cite[Thm.~2]{califano2024effect}:
\begin{equation}
    u^*(x)
    = u_{\rm nom}(x)
       - \mathbb I_{\{\Psi_{\mathrm{ptB}}(x)<0\}} 
         \frac{G(x)^\top\nabla h_{\mathrm{ptB}}(x)}
              {[h_{\mathrm{ptB}},h_{\mathrm{ptB}}]_{GG^\top}(x)} 
         \Psi_{\mathrm{ptB}}(x),
       \label{eq:u-safe-pb}
\end{equation}
where $\Psi_{\mathrm{ptB}}(x)
    \coloneqq \underline\Phi(x)
       + \nabla h_{\mathrm{ptB}}(x)^\top G(x)u_{\rm nom}(x)$,
and $\mathbb I_{\{\Psi_{\mathrm{ptB}}(x)<0\}}$ 
encodes complementary slackness of a single inequality constraint.
The regularity conditions for an $h_{\mathrm{ptB}}$ to yield safety are collected into the following assumption:
\begin{assumption}(ptB-CBF regularity)\label{as:ptB-CBF-reg}
    For each barrier specification $s = 1, \dots, c$:
    \begin{enumerate}[i)]
    \item \textit{Barrier regularity:} $h_{\mathrm{ptB}, s}\in C^1(\mathcal D)$ with $\nabla h_{\mathrm{ptB}, s}(x)\neq 0\ \forall x\in\partial\mathcal S_{\mathrm{ptB}, s}$, where $\mathcal S_{\mathrm{ptB}, s}\coloneqq\{x:h_{\mathrm{ptB}, s}(x)\ge 0\}$.
    \item \textit{Feasibility of the QP solution map $u^*$:} The feedback $u^*(x)$ is locally Lipschitz and satisfies the ptB-CBF constraint \eqref{eq:pb-cbf-constraint} for all $x\in\mathcal D$.
    \end{enumerate}
\end{assumption}

We combine the ptB barrier dominance in
Lemma~\ref{lem:pb-cred-dominance}, the drift-side bound in Lemma~\ref{lem:Phi-lower-hp}, and the generic B-CBF in 
Theorem~\ref{th:B-forward-invar-via-B-cbf} 
to obtain a safety guarantee for the true port-transversal safe sets.
\begin{theorem}[Bayesian safety via ptB-CBFs]
\label{th:bayesian-safety-pb-cbf-clean}
Let each $h_{\mathrm{ptB}, s}$ be a barrier function that meets the ptB-CBF regularity conditions in Assumption~\ref{as:ptB-CBF-reg} with the uncertainty constraints:
\begin{enumerate}[i)]
    \item \textit{Energy uncertainty:} 
    Assumption~\ref{ass:energy-credible-band} holds with $\eta_{\mathrm{ptB}}$;
    \item \textit{Drift uncertainty:} Assumption~\ref{ass:drift-ellipsoid} holds with $\eta_{\rm dr}$.
\end{enumerate}
Then, for any deterministic $x_0\in\mathcal S_{\mathrm{ptB}} = \bigcap_{s=1}^c\{h_{\mathrm{ptB}, s}\geq 0\}$,
\begin{equation}
    \pi\Bigl(
        \theta :
        x^\theta(t;x_0)\in\mathcal S_\theta
        \ \forall t\ge0
        \Bigm| \mathfrak D
    \Bigr)
    \ge  1 - (\eta_{\rm dr} + \eta_{\mathrm{ptB}}),
    \label{eq:pb-cbf-safety-prob}
\end{equation}
where $\mathcal S_\theta\coloneqq\bigcap_{s=1}^c\{x:h_{s,\theta}(x)\ge 0\}$ denotes the port-transversal safe set in Definition~\ref{def:vector-pt-barrier} with the true $\theta$.
\end{theorem}

\begin{proof}[Proof]
Consider any design safe set $\mathcal{S}_{\mathrm{ptB}}$ defined in Definition~\ref{def:h-eb};
we first show that $\mathcal S_{\mathrm{ptB}}$ is forward invariant for every $\theta \in \Theta_{\rm dr}$, with $\pi(\Theta_{\rm dr}\mid \mathfrak{D})\geq 1-\eta_{\rm dr}$.
Recall the drift credible set $\Theta_{\mathrm{dr}}$ defined in \eqref{eq:Theta-dyn},
with $\mathcal F_\eta(x)$ given in~\eqref{eq:drift-ellipsoid}.  
By Assumption~\ref{ass:drift-ellipsoid}, 
$$\pi(\Theta_{\rm dr}\mid\mathfrak D)\ge 1-\eta_{\rm dr}.$$
Fix any $\theta\in\Theta_{\rm dr}$. 
Then, $f_\theta(x)\in\mathcal F_\eta(x)$ for all $x\in\mathcal D$.
Further, under Assumption~\ref{ass:drift-ellipsoid}, Lemma~\ref{lem:Phi-lower-hp} yields $\underline\Phi(x)\le \Phi_\theta(x)$ for all $x\in\mathcal D$ and for all $\theta \in \Theta_{\rm dr}$ with $\pi(\Theta_{\rm dr}\mid \mathfrak{D})\geq 1-\eta_{\rm dr}$.
By substituting \eqref{eq:Phi-lower-closed-form} into \eqref{eq:pb-cbf-qp}, notice that for all $\theta \in \Theta_{\rm dr}$, the ptB-CBF constraint~\eqref{eq:pb-cbf-constraint} enforced by $u^*$ in \eqref{eq:pb-cbf-qp} implies the standard CBF inequality for each $s$:
\begin{equation*}
    L_{f_\theta}h_{\mathrm{ptB},s}(x) + L_Gh_{\mathrm{ptB},s}(x)u^*(x)\ge -\alpha(h_{\mathrm{ptB},s}(x)),
\end{equation*}
for all $x\in\mathcal D$.
Therefore, the ptB-CBF constraint enforces the deterministic CBF inequality uniformly over $\theta \in \Theta_{\rm dr}$.
Theorem~\ref{th:B-forward-invar-via-B-cbf},
$\mathcal S_{\mathrm{ptB}}$ is $(1-\eta_{\rm dr})$-Bayesian forward invariant, 
i.e.,
\[
\pi\bigl(\theta:\ x^\theta(t;x_0)\in\mathcal S_{\mathrm{ptB}}\ \forall t\ge0 \mid \mathfrak D\bigr)
\ge 1-\eta_{\rm dr}.
\]

To lift invariance from $\mathcal S_{\mathrm{ptB}}$ to $\mathcal S_\theta$, we define the energy-dominance credible set as:
\[
\Theta_{\mathrm{ptB}}
\coloneqq
\{\theta\in\Theta\!:\!h_{\theta,s}(x)\ge h_{\mathrm{ptB},s}(x)\ \forall x\in\mathcal D,\forall s\!=\!1,\dots,c\}.
\]
Then, by Lemma~\ref{lem:pb-cred-dominance}, 
we have
$\pi(\Theta_{\mathrm{ptB}}\mid\mathfrak D)\ge 1-\eta_{\mathrm{ptB}}$, and for every $\theta\in\Theta_{\mathrm{ptB}}$,
$\mathcal S_{\mathrm{ptB}}\subseteq \mathcal S_\theta$.
Therefore, for any $\theta\in\Theta_{\rm dr}\cap\Theta_{\mathrm{ptB}}$:
$$x^\theta(t;x_0)\in\mathcal S_{\mathrm{ptB}}\ \ \forall t\geq 0,$$
and hence $x^\theta(t;x_0)\in\mathcal S_\theta\ \forall t\geq 0$.
Moreover,
\begin{equation}\label{eq:set-inclusion}
    \Theta_{\rm dr}\cap\Theta_{\mathrm{ptB}}
    \subseteq
    \{\theta:\ x^\theta(t;x_0)\in\mathcal S_\theta\ \forall t\ge 0\}.
\end{equation}
By the union bound and De Morgan's laws,
\begin{equation}\label{eq:union_bound}
    \pi(\Theta_{\rm dr}\cap\Theta_{\mathrm{ptB}}\mid\mathfrak D) \ge 1-(\eta_{\rm dr}+\eta_{\mathrm{ptB}}).
\end{equation}
Combining \eqref{eq:union_bound} with the set inclusion in \eqref{eq:set-inclusion} yields~\eqref{eq:pb-cbf-safety-prob}.
\end{proof}

\section{Numerical Simulations}\label{sc:experiment}

We examine the ptB-CBF framework on two systems of increasing complexity: a one-degree-of-freedom mass-spring oscillator serving as a linear baseline, and a fully actuated planar two-link manipulator with configuration-dependent inertia.
In both cases, the Hamiltonian is learned from noisy, irregularly sampled state observations via a multi-output GP with a port-Hamiltonian kernel (MS-PHS)~\cite{leung2025learning}.
Kernel hyperparameters and the observation-noise variance are estimated by minimizing the negative log marginal likelihood with automatic relevance determination (ARD) length-scales~\cite{rasmussen2003gaussian, neal2012bayesian}.
All closed-loop trajectories are integrated with a forward Euler scheme.

\subsection{Mass-Spring Oscillator}\label{sec:exp-mass-spring}

We consider an undamped mass-spring system
$\ddot{q} = -(k/m)\,q + (b/m)\,u$
with mass $m = 2.0$, stiffness $k = 0.5$, input gain $b = 1.0$, and damping $d = 0.0$.
The true Hamiltonian is $H(q,p) = p^2/(2m) + k\,q^2/2$.
Training data is generated by integrating the system over $t \in [0, 20]$ from initial condition $x_0 = (1.0,\, 0.0)^\top$ under a sinusoidal forcing $u(t) = \cos(1.7\,t)$, using a fourth-order Runge-Kutta solver with step size $\Delta t = 4 \times 10^{-3}$.
From the first half of the resulting trajectory, $K = 25$ state observations are uniformly sampled and corrupted by i.i.d.\ Gaussian noise with standard deviation $\sigma_x = 0.05$.
A third-order Adams-Bashforth GP integrator is then trained for $200$ iterations.

Two barrier specifications are imposed simultaneously:
a kinematic constraint $\varphi_q(x) = q + 1 \geq 0$ (i.e., $q \geq -1$), and an energy upper-bound constraint $\varphi_H(x) = \bar{H} - H(q,p) \geq 0$ with $\bar{H} = 0.05$.
Following Example~\ref{ex:composite}, the kinematic barrier requires reshaping by the kinetic energy ($\beta_{q,p} = 1$), while the energy barrier is already port-transversal ($\beta_{H,p} = 0$).
Both ptB barriers share the same scalar uncertain quantity---the posterior Hamiltonian---so a single credible band suffices.
The nominal controller is $u_{\mathrm{nom}}(t) = 3.0\,\cos(0.85\,t)$, chosen to violate both constraints in the absence of filtering.
The closed-loop filtered trajectory is generated for $T = 20\,\mathrm{s}$ with $N = 500$ Euler steps.

\vspace{.5em}
\noindent\textit{Effect of Credibility Budgets}:
\begin{figure}[t]
    \centering
    \includegraphics[width=\linewidth]{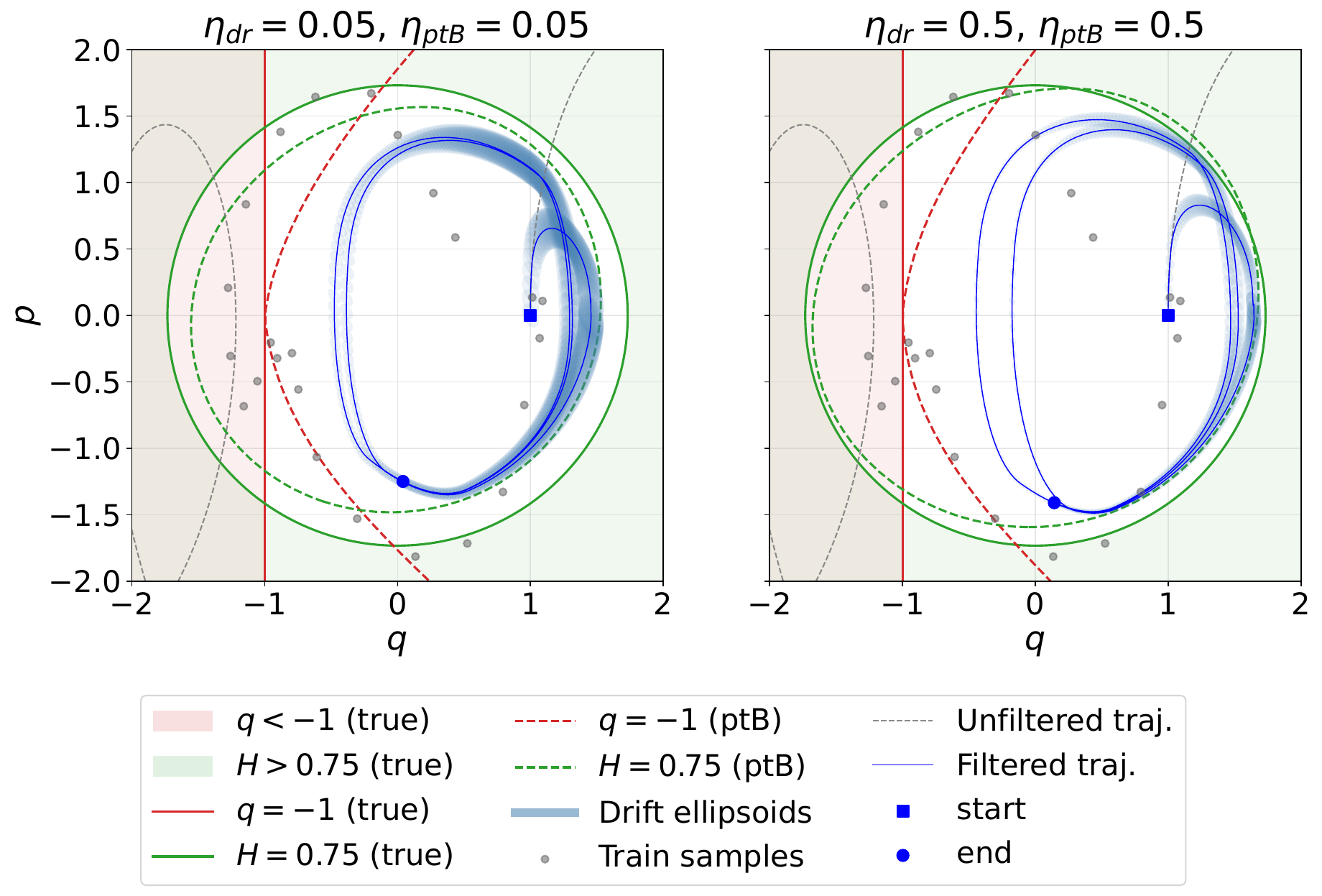}
    \caption{Phase portraits of the mass-spring system with kinematic ($q \geq -1$) and energy upper-bound ($H \leq 0.75$) barriers, trained on $K = 25$ noisy samples.
    Left: $(\eta_{\mathrm{dr}}, \eta_{\mathrm{ptB}}) = (0.05, 0.05)$.
    Right: $(0.5, 0.5)$.
    Solid contours denote true safe-set boundaries; dashed contours denote ptB barriers.
    Blue shaded regions show the drift credible ellipsoids along the filtered trajectory.}
    \label{fig:exp01_phase_portrait_ub}
\end{figure}
Fig.~\ref{fig:exp01_phase_portrait_ub} shows the phase-plane behavior for two credibility settings: a conservative pair $(\eta_{\mathrm{dr}}, \eta_{\mathrm{ptB}}) = (0.05, 0.05)$ and a less conservative pair $(0.5, 0.5)$.
In the conservative case (left panel), the ptB barriers (dashed contours) substantially enlarge the excluded region beyond the true unsafe set (solid contours), reflecting the wide posterior uncertainty from only $K = 25$ observations.
The drift credible ellipsoids (shaded blue) are correspondingly large, and the filtered trajectory (solid blue) remains well within the intersection of both ptB safe sets.
In the less conservative case (right panel), the ptB safe-set boundaries shrink toward their true counterparts: both the energy credible band and the drift ellipsoids contract, allowing the trajectory greater freedom.
The unfiltered nominal trajectory (dashed gray) violates both constraints in both panels, confirming that the safety filter intervenes.

\begin{figure}[t]
    \centering
    \includegraphics[width=\linewidth]{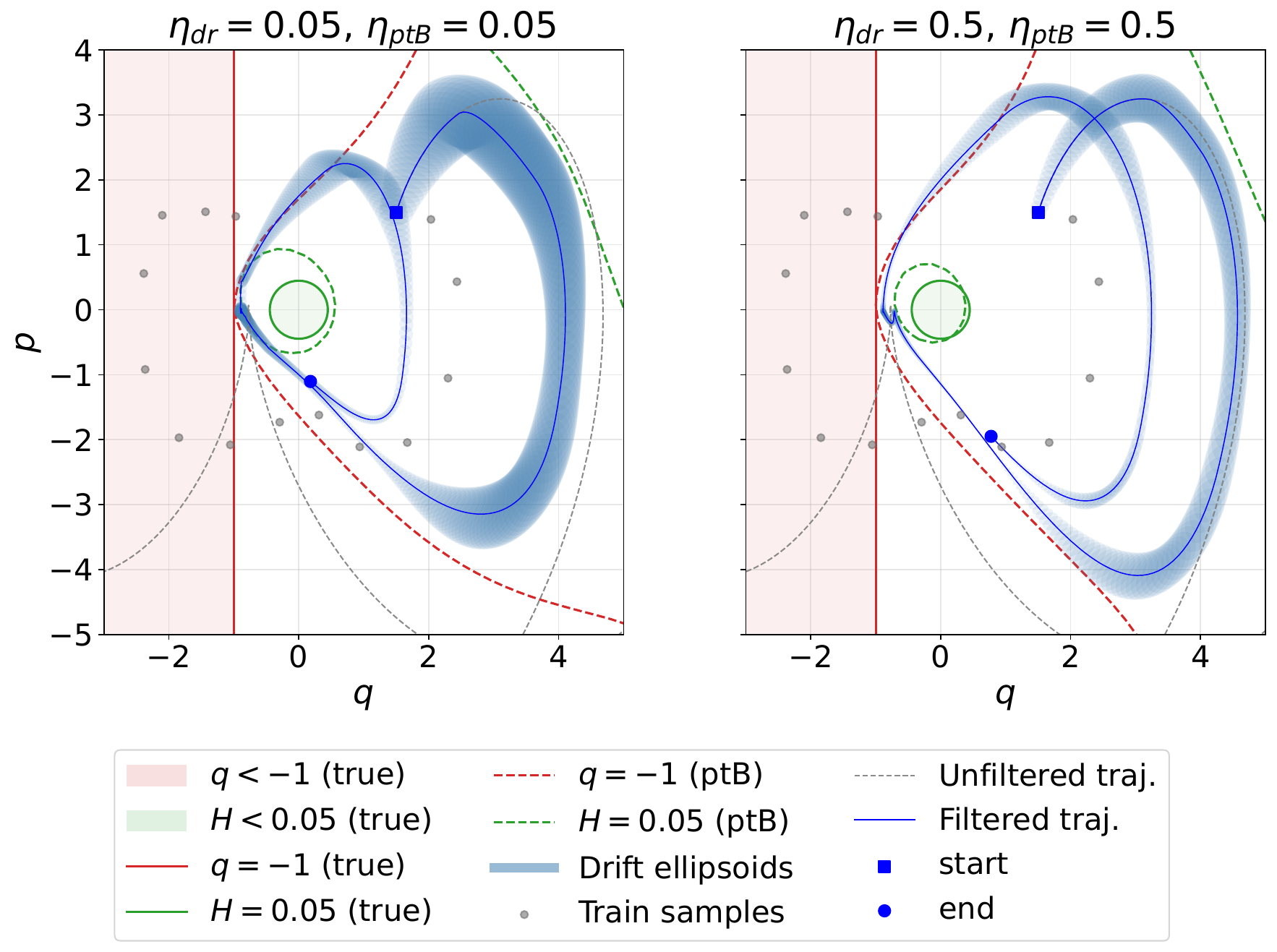}
    \caption{Phase portraits with kinematic ($q \geq -1$) and energy lower-bound ($H \geq 0.05$) barriers, $K = 15$.
    The conservative setting (left) yields wide ptB margins; the less conservative setting (right) shows smaller drift ellipsoids and a less conservative ptB barrier.
    It is worth noticing that the drift credible ellipsoids shrink near observed training samples in both figures.
    }
    \label{fig:exp01_phase_portrait_lb}
\end{figure}


Fig.~\ref{fig:exp01_phase_portrait_lb} shows the corresponding results for a lower energy-bound constraint $H(q,p) \geq 0.05$ with fewer data points ($K=15$).
Unlike the upper-bound case, the lower energy-bound constraint removes a neighborhood of the origin and therefore produces a non-convex safe set in the phase plane.
In this more restrictive geometry, the ptB-CBF filter remains feasible and enforces the prescribed safety specification over the full horizon.
As before, the conservative setting yields wider Bayesian margins, while the less conservative setting tracks the true boundary more closely.

\vspace{.5em}
\noindent\textit{Monte Carlo Credibility Validation:}
To assess the end-to-end safety guarantee of Theorem~\ref{th:bayesian-safety-pb-cbf-clean} empirically, we run independent train-filter-evaluate trials across a grid of credibility pairs $(\eta_{\mathrm{dr}}, \eta_{\mathrm{ptB}})$ on the mass-spring system.
Each trial draws a fresh noise realization, trains a GP from $K = 20$ observations, and constructs the ptB barriers and drift ellipsoid at the specified credibility levels.
The nominal controller is $u_{\mathrm{nom}}(t) = 3.0\,\cos(0.85\,t)$, chosen to violate safety constraints in the absence of filtering.
Parameters, initial conditions, and safety specifications are identical to the setup used in Fig.~\ref{fig:exp01_phase_portrait_ub}.

Fig.~\ref{fig:mc_empirical_credibility_compare_20K} shows the resulting Monte Carlo credibility validation.
The left panel plots the grouped empirical violation rate against the predicted bound $\eta_{\mathrm{dr}} + \eta_{\mathrm{ptB}}$.
The empirical violation rate remains monotone in the total credibility budget.
However, the discrepancy between prediction and observation is notably smaller in the high total-credibility-budget regime.
The right panel shows how the empirical violation rate changes over the full $(\eta_{\mathrm{dr}}, \eta_{\mathrm{ptB}})$ grid.
The plot shows that the violation rate is more sensitive to $\eta_{\mathrm{ptB}}$ than to $\eta_{\mathrm{dr}}$, revealing structure that is obscured when the results are collapsed onto the one-dimensional quantity $\eta_{\mathrm{dr}} + \eta_{\mathrm{ptB}}$.
Therefore, while the union bound captures the overall trend, the two uncertainty budgets enter asymmetrically in this example.


\begin{figure}[t]
    \centering
    \includegraphics[width=\linewidth]{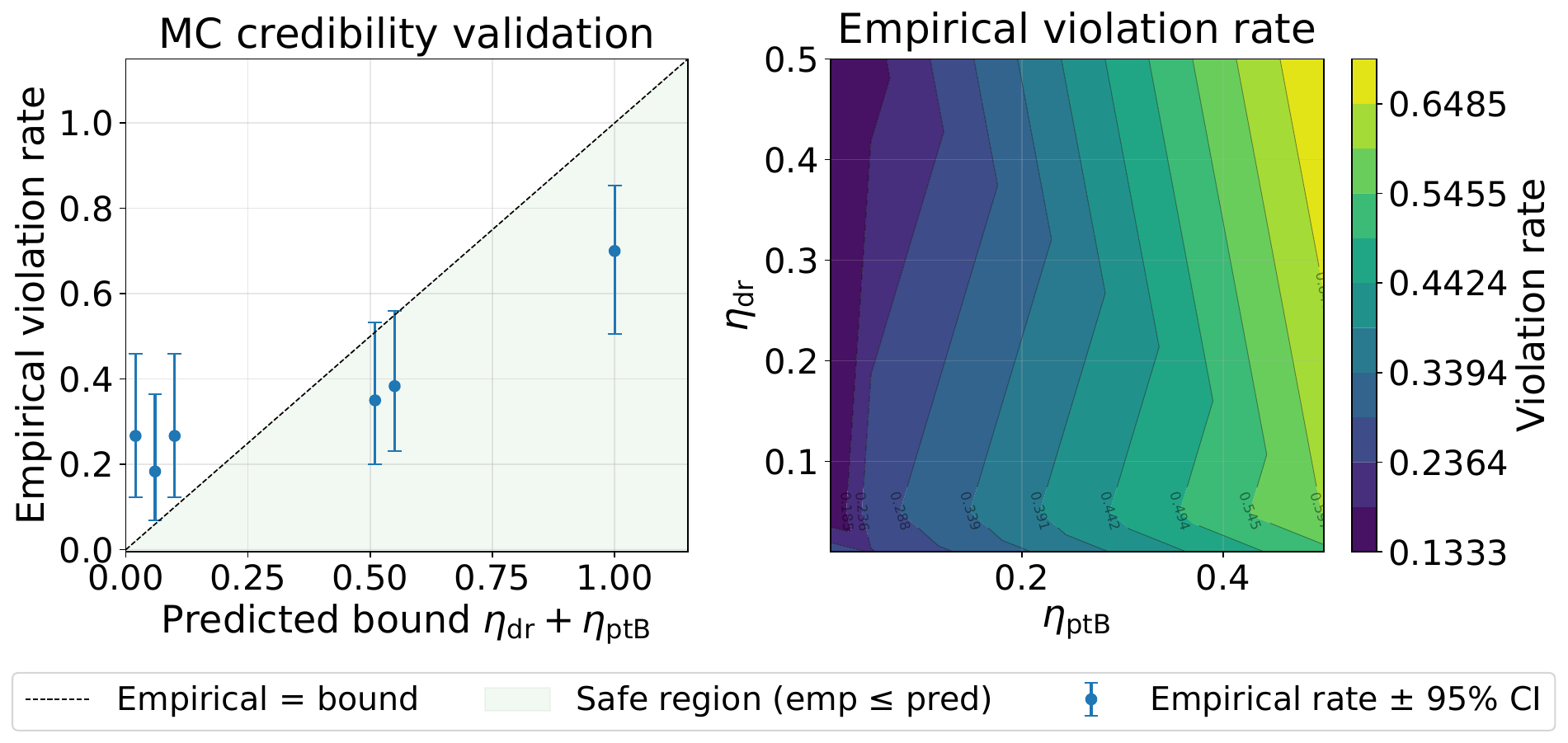}
    \caption{Monte Carlo credibility validation for the mass-spring system with $K=20$ training samples. Empirical violation rate (left), grouped by predicted bound $\eta_{\mathrm{dr}}+\eta_{\mathrm{ptB}}$, with $95\%$ confidence intervals across trials.
    The dashed diagonal denotes exact agreement, while the green region corresponds to empirically safe performance $\mathrm{emp}\leq \mathrm{pred}$. 
    Empirical violation-rate heatmap (right) over $\eta_{\mathrm{dr}}$ and $\eta_{\mathrm{ptB}}$ exposes the hidden structure beneath the left-panel trend, showing that the violation rate is noticeably more sensitive to $\eta_{\mathrm{ptB}}$ than to $\eta_{\mathrm{dr}}$. 
    }
    \label{fig:mc_empirical_credibility_compare_20K}
    \vspace{-0.5em}
\end{figure}

Fig.~\ref{fig:data_efficiency_violation_rate_summary} shows that the empirical violation rate decreases as more training data are used.
This trend indicates that the learned model becomes more reliable with increasing $K$, which in turn improves the practical safety performance of the ptB-CBF filter.
The shaded confidence band also suggests that the variation across credibility settings remains moderate and tends to shrink as the data become more informative.

\begin{figure}[t]
    \centering
    \includegraphics[width=\linewidth]{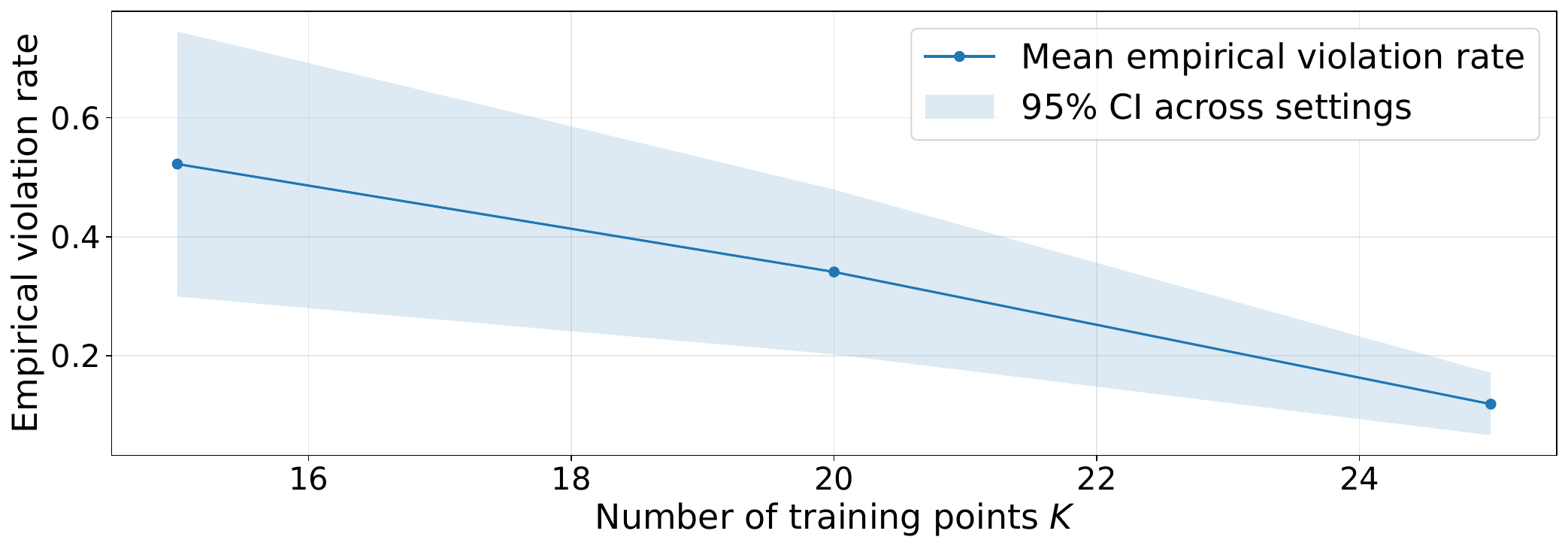}
    \caption{Empirical violation rate versus training-set size $K$, averaged across all combinations of credibility budgets.}
    \label{fig:data_efficiency_violation_rate_summary}
    \vspace{-1.5em}
\end{figure}

\subsection{Planar Two-Link Manipulator}\label{sec:exp-manipulator}

We consider a fully actuated planar double pendulum with state $x = (q_1, q_2, p_1, p_2)$ consisting of joint angles and generalized momenta.
The system parameters are link masses $m_1 = m_2 = 1.0\,\mathrm{kg}$, link lengths $l_1 = l_2 = 1.0\,\mathrm{m}$, gravitational acceleration $g = 9.81\,\mathrm{m/s^2}$, viscous damping coefficients $d_1 = 0.2$, $d_2 = 0.1$, and actuation gains $b_1 = b_2 = 1.0$.
The Hamiltonian $H(q,p) = \tfrac{1}{2}p^\top M(q)^{-1}p + V(q)$ is non-separable in canonical coordinates because the mass matrix $M(q)$ depends on $q$, where:
\begin{equation*}
    \begin{aligned}
        M(q) &= \begin{bmatrix} (m_1 + m_2) + m_2 + 2 m_2 \cos q_2 & m_2 + m_2 \cos q_2 \\ m_2 + m_2 \cos q_2 & m_2 \end{bmatrix},\\
        V(q) &= -(m_1 + m_2) g \cos q_1 - m_2 g\cos(q_1 + q_2).
    \end{aligned}
\end{equation*}
Following Remark~\ref{rem:mechanical-posterior}, the kinetic energy $T(q,p) = H(q,p) - V(q)$ serves as the unique shifted blanket storage, and its posterior is obtained from the difference of two correlated GP evaluations via~\eqref{eq:mu-T}--\eqref{eq:sigma-T}.




\vspace{.5em}
\noindent\textit{Safe-Set Comparison: ptB-CBF vs.\ Unstructured GP-CBF}
A key advantage of the ptB-CBF framework is that it propagates uncertainty only through the scalar Hamiltonian GP, whereas an unstructured GP-CBF~\cite{jagtap2020control, long2022safe, dhiman2021control} must account for uncertainty across all drift components.
To quantify the resulting conservatism gap, we compare the effective safe sets for the kinetic energy constraint $T \leq \bar{T}$ on the two-dimensional slice $q_2 = p_2 = 0$, varying $(q_1, p_1)$ on a $60 \times 60$ grid.

The ptB-CBF barrier is $h_{\mathrm{ptB}}(x) = \bar{T} - \mu_T(x) - \beta_{\mathrm{ptB}}\,\sigma_T(x)$, where $\mu_T$ and $\sigma_T$ are the posterior kinetic-energy mean and standard deviation from~\eqref{eq:mu-T}--\eqref{eq:sigma-T}.
The unstructured alternative replaces the scalar energy uncertainty with the full drift covariance projected onto the barrier gradient: $h_{\mathrm{unstr}}(x) = \bar{T} - \mu_T(x) - \beta_{\mathrm{ptB}}\sqrt{\nabla T(x)^\top \Sigma_f(x)\,\nabla T(x)}$, using the same scaling factor $\beta_{\mathrm{ptB}}$ for a fair comparison.

Fig.~\ref{fig:exp02_safeset} shows the results.
The true safe set (left panel) is the band $\{(q_1, p_1) : T(q_1, 0, p_1, 0) \leq \bar{T}\}$, which varies with $q_1$ through the configuration-dependent inertia $M(q)$.
The ptB-CBF safe set (center) retains approximately $67\%$ of the true safe-set area.
The unstructured GP-CBF (right) retains only $14\%$, because projecting a four-dimensional drift covariance onto $\nabla T$ introduces substantial conservatism that the scalar Hamiltonian posterior avoids.
The $4.8$ times improvement in usable safe-set area demonstrates the practical benefit of concentrating uncertainty through the port-Hamiltonian structure.

\begin{figure}[t]
    \centering
    \includegraphics[width=\linewidth]{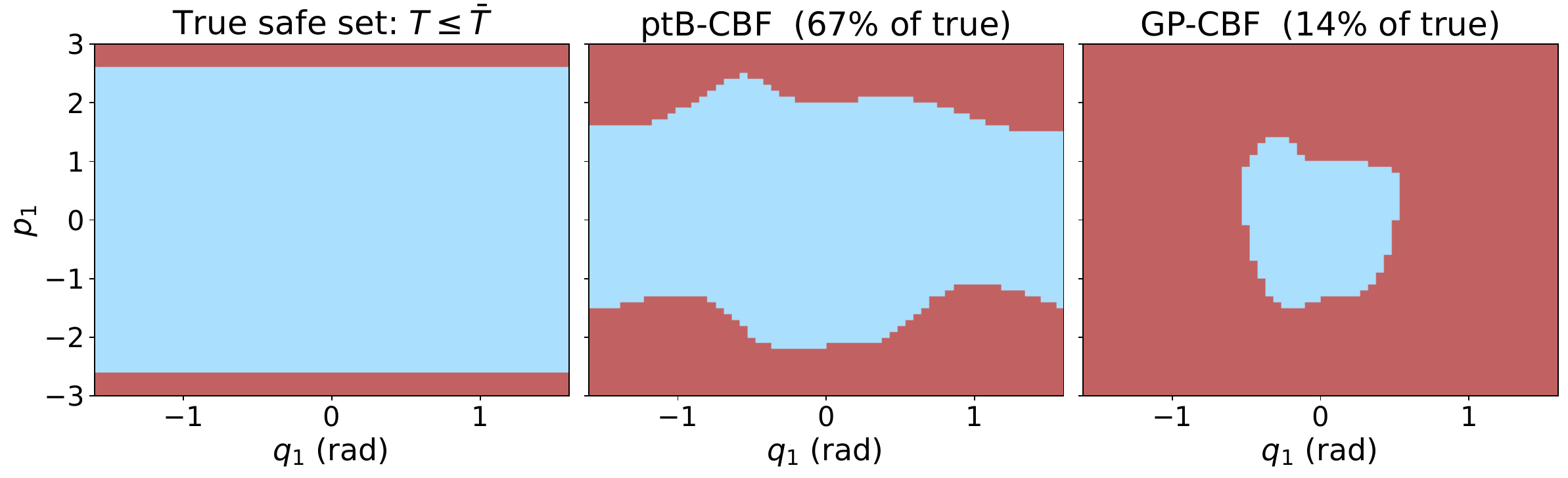}
    \caption{Safe-set comparison on the $(q_1, p_1)$ slice ($q_2 = p_2 = 0$) for the kinetic energy constraint $T \leq \bar{T}$.
    Left: true safe set.
    Center: ptB-CBF ($67\%$ of true area).
    Right: unstructured GP-CBF ($14\%$ of true area).
    The ptB-CBF is approximately $4.8\times$ less conservative than the unstructured alternative.}
    \label{fig:exp02_safeset}
    \vspace{-1.5em}
\end{figure}

\section{Conclusion}\label{sec:conclusion}
We developed a port-transversal Bayesian control barrier function (ptB-CBF) framework for port-Hamiltonian systems with Hamiltonians learned from data. 
The key idea is to decompose the safety problem into two parts: a known structural component determined by the interconnection topology, and unknown energy values associated with the blanket compartments. 
Given a calibrated posterior over the Hamiltonian---instantiated here via Gaussian process regression---we construct Bayesian barriers whose safe sets are high-probability inner approximations of the true allowable set.
In parallel, we derive a credible ellipsoid for the drift term to capture uncertainty in the control barrier function inequality. 
Since these two uncertainty sources are handled through separate credible sets, the resulting end-to-end safety guarantee is at least $1 - (\eta_{\mathrm{dr}} + \eta_{\mathrm{ptB}})$, with both confidence budgets chosen at design time. 
Then, we conduct numerical experiments on a mass-spring oscillator and a planar manipulator to validate the proposed ptB-CBF filter.
The filter produces conservative safety margins that account for uncertainty in both the posterior dynamics and the Hamiltonian surface.
We validate this credibility guarantee by comparing the empirical violation rate with the predicted bound.
Finally, we also compare ptB-CBF with unstructured GP-CBF methods and show that the proposed approach preserves substantially larger safe sets.
In future work, we plan to extend the proposed framework to real-world robotic applications, validate the framework on hardware, and develop tighter probabilistic guarantees that reduce the conservatism of the current union-bound analysis.

\bibliographystyle{IEEEtran}
\bibliography{refs}

\section{Appendix}\label{sec:appendix}
Additional conceptual constructs are organized here.

\subsection{Bayesian-CBF Theory}\label{sec:bayesian-cbf-theory}
Learning-based CBF methods already enforce safety under model uncertainty by imposing high-probability variants of the standard CBF inequality, typically either robustly over a credible set for the learned drift, e.g., GP error bounds, or as chance constraints under a Bayesian dynamics model \cite{jagtap2020control,long2022safe,dhiman2021control}.
However, we have yet to see an explicit, model-agnostic layer as a reusable interfacing framework for later structured constructions.
To bridge this gap, we introduce the Bayesian-CBF framework and formulate the problem afterward.

We start by establishing a generic Bayesian formulation of the standard CBF theory.
Let the parameter space $\Theta \subset \mathbb{R}^p$ be a Borel subset with its Borel $\sigma$-algebra $\mathcal{B}(\Theta)$, and let $\pi(\cdot\mid\mathfrak{D})$ be a posterior on $\Theta$ with a dataset $\mathfrak{D} \coloneqq \{({x}_i, y_i)\}_{i=1}^N$.
Then,
\begin{equation*}
    \pi(\cdot\mid\mathfrak{D}): \mathcal{B}(\Theta)\to [0, 1]
\end{equation*}
is a probability measure with the tuple $(\Theta, \mathcal{B}(\Theta), \pi(\cdot\mid \mathfrak{D}))$ forming a probability space.
An example of the model parameter $\theta$ is the posterior mean $\mu_H$ defined in 
\cite[Eq.~(28)]{leung2025learning}.
A credible model set is defined as:
\begin{definition}[Credible model set]\label{df:cred-model-set}
    For a given credibility level \(1-\eta\in(0,1)\), a credible model set is a measurable subset \(\Theta_\eta\subset\Theta\) such that
    $
    \pi(\Theta_\eta\mid\mathfrak{D}) \ge 1-\eta.
    $
\end{definition}
For each $\theta\in \Theta$, an ODE with closed-loop dynamics:
\begin{equation}\label{eq:theta-ODE}
    \dot{x}(t) = f_\theta(x(t)) + g_\theta(x(t))u(x(t)),\ x(0)=x_0,
\end{equation}
is defined.
Let the solution to \eqref{eq:theta-ODE} be $x^\theta(t; x_0)$,
the solution $x^\theta(t; x_0)$
exists and depends continuously on $\theta$ under standard local Lipschitz and growth conditions.
The map $\theta\mapsto x^\theta(t;x_0)$ is Borel-measurable for each $t$, so for any closed set $\mathcal S\subset \mathcal{D}$, the event $\{\theta: x^\theta(t)\in\mathcal S\ \forall t\ge0\}$
is measurable. 
These regularities justify the definition of Bayesian forward invariance and the probability $\pi(\theta: x^\theta(t)\in\mathcal S\ \forall t\ge0\mid\mathfrak{D})$; see 
Appendix~\ref{sec:appendix-measure}.

\begin{definition}[Bayesian forward invariance]\label{df:B-forward-invar}
    Suppose the true dynamics are indexed by a random parameter \(\theta\in\Theta\) with posterior \(\pi(\cdot\mid\mathfrak{D})\). 
    For each \(\theta\), let \(x^\theta(t)\) be the closed-loop trajectory under policy \(u\).
    A set \(\mathcal S\) is \((1-\eta)\)-Bayesian forward invariant if, for all deterministic \(x_0\in\mathcal S\),
    \[
    \pi( \theta : x^\theta(t)\in\mathcal S \ \forall t\ge0 \mid \mathfrak{D}) \ge 1-\eta.
    \]
\end{definition}
Let \(h:\mathcal{D}\to\mathbb R\) be \(C^1\) and \(\mathcal S = \{x: h(x)\ge 0\}\), a uniform Bayesian-CBF is defined as:
\begin{definition}[Uniform Bayesian-CBF]\label{df:unif-B-cbf} 
We say that \(h\) is a \((1-\eta)\)-uniform Bayesian control barrier function (B-CBF) for the posterior \(\pi(\cdot\mid\mathfrak{D})\) if there exists an extended class-\(\mathcal K_\infty\) function \(\alpha(\cdot)\) such that, for every \(x\in \mathcal{D}\), 
\[ \sup_{u\in\mathcal U}\inf_{\theta\in\Theta_\eta} \bigl[ L_{f_\theta} h(x) + L_{g_\theta} h(x) u \bigr] \ge -\alpha(h(x)). 
\]
\end{definition}
Equivalently, the $\sup$-$\inf$ form of the uniform Bayesian-CBF condition can be written as follows if the supremum is attainable:
for all \(x\in \mathcal{D}\), there exists a \(u\in\mathcal U\) such that,
$$L_{f_\theta} h(x) + L_{g_\theta} h(x) u \ge -\alpha(h(x))\ \forall \theta\in \Theta_\eta.$$
The attainability of the supremum on $\mathcal{U}$ is guaranteed when $\mathcal{U}$ is nonempty, convex, and compact.
Now, let the Bayesian admissible control set be:
\begin{align*}
    &\mathcal{K}_{\mathrm{b.cbf}}(x)\coloneqq \\
    &\left\{u\in\mathcal{U}: L_{f_\theta} h(x) + L_{g_\theta} h(x) u \ge -\alpha \bigl(h(x)\bigr) \ \forall\theta\in \Theta_\eta\right\},
\end{align*}
and the Bayesian CBF-QP is:
\begin{equation*}
    \begin{aligned}
        u^*(x)=\ &\arg\min_{u\in\mathcal{U}}\ \|u-u_{\mathrm{nom}}(x)\|^2\\
        &\ \text{s.t.}\quad L_{f_\theta} h(x)+L_{g_\theta} h(x) u \ge -\alpha \bigl(h(x)\bigr),\ \forall\theta\in \Theta_\eta.
    \end{aligned}
\end{equation*}
Notice that the notion of uniform Bayesian-CBF is related to robust-CBF in the following way.
\begin{remark}[Connection to robust-CBFs]\label{rm:connection-bcbf-rcbf-cbvf}
Definition~\ref{df:unif-B-cbf} can be interpreted as a {robust}-CBF condition over a credible set of models.
For each state $x\in\mathcal D$, define the set-valued drift and input maps induced by the credible set:
\[
    \mathcal F_\eta(x)\coloneqq \{f_\theta(x):\theta\in\Theta_\eta\},\qquad
    \mathcal G_\eta(x)\coloneqq \{g_\theta(x):\theta\in\Theta_\eta\}.
\]
Then, the uniform B-CBF inequality is equivalently:
\[
    \sup_{u\in\mathcal U}\ \inf_{\substack{f\in\mathcal F_\eta(x)\\ g\in\mathcal G_\eta(x)}}
    \nabla h(x)^\top\bigl(f+g\,u\bigr)
    \ \ge\ -\alpha\bigl(h(x)\bigr),
\]
which is the standard robust-CBF form~\cite{xu2015robustness,jankovic2018robust,choi2021robust}, except that the uncertainty sets
$\mathcal F_\eta(x),\mathcal G_\eta(x)$ are {data driven} and {posterior calibrated}, i.e., they come from $\Theta_\eta$, rather than chosen a priori.
\end{remark}

We are ready to present the key theorem:
\begin{theorem}[Bayesian safety via B-CBFs]\label{th:B-forward-invar-via-B-cbf}
    Suppose:
    \begin{enumerate}[i)]
        \item \textit{CBF regularity}: $h\in C^1(\mathcal D)$ with $\nabla h(x)\neq 0$ for all $x\in\partial\mathcal S$, where $\mathcal S=\{x\in\mathcal D:h(x)\ge 0\}$,
        \item \textit{B-CBF}: The barrier function \(h\) is a \((1-\eta)\)-uniform Bayesian-CBF with credible set \(\Theta_\eta\),
        \item \textit{Lipschitz feedback}: The input \(u^*(x)\) is a Lipschitz continuous feedback with \(u^*(x)\in\mathcal K_{\rm b.cbf}(x)\) for all \(x\).
    \end{enumerate}
    Then,
    for any deterministic initial condition \(x_0\in\mathcal S\),
    \[
      \pi\bigl(\theta : x^\theta(t)\in\mathcal S\ \forall t\ge0 \mid \mathfrak{D}\bigr)  \ge  1-\eta,
    \]
    i.e.,\ \(\mathcal S\) is \((1-\eta)\)-Bayesian forward invariant.
\end{theorem}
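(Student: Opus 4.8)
The plan is to reduce the Bayesian claim to a $\theta$-by-$\theta$ application of the deterministic CBF safety result~\cite[Thm.~2]{ames2019control}, followed by monotonicity of the posterior measure. First I would fix an arbitrary $\theta\in\Theta_\eta$ and consider the closed-loop vector field $F_\theta(x)\coloneqq f_\theta(x)+g_\theta(x)u^*(x)$ appearing in~\eqref{eq:theta-ODE}. Since $f_\theta,g_\theta$ are locally Lipschitz and $u^*$ is a Lipschitz feedback (hypothesis~iii), $F_\theta$ is locally Lipschitz, so the closed-loop ODE has a unique maximal solution $x^\theta(\cdot;x_0)$; under the standing growth conditions mentioned just before Definition~\ref{df:B-forward-invar}, or simply when $\mathcal S$ is bounded, this solution is defined for all $t\ge0$.

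The crux is then to check that all hypotheses of~\cite[Thm.~2]{ames2019control} hold for the deterministic $\theta$-system $\dot x=f_\theta(x)+g_\theta(x)u$ with barrier $h$. By hypothesis~(iii), $u^*(x)\in\mathcal{K}_{\mathrm{b.cbf}}(x)$ for every $x\in\mathcal D$, i.e. $L_{f_{\theta'}}h(x)+L_{g_{\theta'}}h(x)u^*(x)\ge-\alpha(h(x))$ for all $\theta'\in\Theta_\eta$; specializing $\theta'=\theta$ shows that $u^*(x)$ lies in the deterministic admissible set $\mathcal{K}_{\mathrm{cbf}}(x)$ of the $\theta$-system, so that set is nonempty at every $x\in\mathcal D$ and $h$ is a CBF for the $\theta$-system with the same extended class-$\mathcal K_\infty$ function $\alpha$ (hypothesis~(ii) guarantees that $\mathcal{K}_{\mathrm{b.cbf}}$ is nonempty in the first place, making the Lipschitz selection in (iii) meaningful). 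Combined with the CBF regularity in hypothesis~(i) ($h\in C^1(\mathcal D)$ and $\nabla h\neq 0$ on $\partial\mathcal S$) and the Lipschitz continuity of $u^*$, \cite[Thm.~2]{ames2019control} yields that $\mathcal S$ is forward invariant for the $\theta$-closed-loop system: $x^\theta(t;x_0)\in\mathcal S$ for all $t\ge0$ whenever $x_0\in\mathcal S$. Equivalently, one argues directly via the comparison lemma: along $x^\theta$ the scalar $y(t)\coloneqq h(x^\theta(t))$ obeys $\dot y\ge-\alpha(y)$ with $y(0)\ge0$, and since $\alpha(0)=0$, comparison with $\dot z=-\alpha(z),\ z(0)=y(0)$ gives $y(t)\ge z(t)\ge0$.

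Since $\theta\in\Theta_\eta$ was arbitrary and $x_0\in\mathcal S$ was an arbitrary deterministic initial state, this establishes the set inclusion $\Theta_\eta\subseteq E$, where $E\coloneqq\{\theta:x^\theta(t;x_0)\in\mathcal S\ \forall t\ge0\}$. By the measurability facts recalled just before Definition~\ref{df:B-forward-invar} (Borel-measurability of $\theta\mapsto x^\theta(t;x_0)$ and closedness of $\mathcal S$, so that $E=\bigcap_{q\in\mathbb{Q}_{\ge0}}\{\theta:x^\theta(q;x_0)\in\mathcal S\}\in\mathcal B(\Theta)$; see Appendix~\ref{sec:appendix-measure}), $E$ is an event, and monotonicity of the probability measure $\pi(\cdot\mid\mathfrak D)$ gives $\pi(E\mid\mathfrak D)\ge\pi(\Theta_\eta\mid\mathfrak D)\ge1-\eta$, which is exactly the claimed $(1-\eta)$-Bayesian forward invariance of $\mathcal S$.

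I expect the argument to be largely bookkeeping once the deterministic CBF theory and the measurability lemma are granted; the only genuinely delicate points are (a) ensuring global (not merely local) existence of $x^\theta$ for every $\theta\in\Theta_\eta$, which I would handle by the growth condition already invoked in the text or by boundedness of $\mathcal S$ (forward invariance then precluding finite-time blow-up), and (b) the measurability of the uncountable-time event $E$, which I would cite from Appendix~\ref{sec:appendix-measure} rather than reprove inline.
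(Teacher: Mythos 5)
Your proposal is correct and follows essentially the same route as the paper's proof: a $\theta$-by-$\theta$ reduction to the deterministic CBF forward-invariance theorem of Ames et al.\ for each model in $\Theta_\eta$, the set inclusion $\Theta_\eta\subseteq\mathcal E(x_0)$, the measurability of the infinite-horizon safety event via the appendix lemma, and monotonicity of the posterior measure. The extra care you take on global existence of solutions and the alternative comparison-lemma argument are refinements of, not departures from, the paper's argument.
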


\begin{proof}
Fix \(x_0\in\mathcal S = \{x:h(x)\ge 0\}\). For each \(\theta\in\Theta_\eta\), the uniform Bayesian-CBF property as in Definition~\ref{df:unif-B-cbf} and the choice \(u^*(x)\in\mathcal K_{\rm b.cbf}(x)\) imply that:
\[
    L_{f_\theta} h(x) + L_{g_\theta} h(x) u^*(x)
     \ge  -\alpha\bigl(h(x)\bigr)
    \qquad\forall x\in\mathcal D.
\]
Thus, for each fixed \(\theta\in\Theta_\eta\), \(h\) is a deterministic CBF for the control-affine system
\(
 \dot x = f_\theta(x)+g_\theta(x)u
\)
under the Lipschitz feedback \(u^*\). 
By the standard CBF forward-invariance theorem~\cite[Theorem~2]{ames2019control}, the closed-loop trajectory \(x^\theta(t;x_0)\) with \(x^\theta(0;x_0)=x_0\) satisfies:
\[
    x^\theta(t;x_0)\in\mathcal S
    \qquad\forall t\ge0,
    \ \forall \theta\in\Theta_\eta.
\]

Define the safety event:
\[
    \mathcal E(x_0)
    \coloneqq
    \bigl\{\theta\in\Theta : x^\theta(t;x_0)\in\mathcal S\ \forall t\ge0\bigr\}.
\]
By the measurability argument in Lemma~\ref{lem:measurable_safe_event}, \(\mathcal E(x_0)\in\mathcal B(\Theta)\), so \(\pi(\mathcal E(x_0)\mid\mathfrak D)\) is well defined. The deterministic argument above shows that
\(
   \Theta_\eta \subseteq \mathcal E(x_0)
\),
hence:
\begin{equation*}
    \begin{aligned}
        \pi\bigl(\theta : x^\theta(t;x_0)\in\mathcal S\ \forall t\ge0 \mid \mathfrak D\bigr)
         &= \pi\bigl(\mathcal E(x_0)\mid\mathfrak D\bigr)\\
         &\ge  \pi\bigl(\Theta_\eta\mid\mathfrak D\bigr)
          \ge  1-\eta,
    \end{aligned}
\end{equation*}
where the last inequality uses the definition of the credible model set \(\Theta_\eta\).
Since \(x_0\in\mathcal S\) was arbitrary, \(\mathcal S\) is \((1-\eta)\)-Bayesian forward invariant.
\end{proof}
The B-CBF formalism makes the recurring pattern in learning-based CBF methods~\cite{jagtap2020control,long2022safe,dhiman2021control} explicit and modular by defining Bayesian forward invariance and uniform B-CBFs over posterior credible model sets.
Altogether, these B-CBF related constructs provide a reusable framework that can be instantiated with different Bayesian learners and structured posteriors, e.g., Hamiltonian posteriors in pB-CBFs, without re-deriving the invariance logic each time.

\subsection{Measure Theoretic Setup for the Bayesian-CBF Theory}\label{sec:appendix-measure}
We provide the formality of the conditions that guarantee the existence and uniqueness of the solution $x^\theta(t)$ on $t\in [0, \infty)$ for each $\theta \in \Theta$ and measurability of infinite-horizon safety events.
\begin{assumption}[Solution regularity]\label{as:ode-regularity}
    Let the closed-loop vector field be $F_\theta(x) \coloneqq f_\theta(x) + g_\theta(x)u(x)$.
    Assume the following:
    \begin{enumerate}[i)]
        \item \textit{Continuity}: For each $\theta\in \Theta$, $F_\theta(\cdot)$ is uniformly continuous in $x \in \mathcal{D}$,
        \item \textit{Local Lipschitz}: For each compact ${\mathcal{K}} \subset \mathcal{D}$, there exists $L_{\mathcal{K}} > 0$ s.t.
        \begin{equation*}
            \|F_\theta(x) - F_\theta(y)\| \leq L_{\mathcal{K}}\|x - y\|\quad \forall x, y \in \mathcal{K}, \forall \theta \in \Theta,
        \end{equation*}
        \item \textit{Linear growth bound}: For some constant $a,b \geq 0$,
        \begin{equation*}
            \|F_\theta(x)\| \leq a\|x\| + b\quad \forall x \in \mathcal{D}, \forall \theta\in \Theta.
        \end{equation*}
    \end{enumerate}
\end{assumption}
Under Assumption~\ref{as:ode-regularity}, for each fixed $\theta\in\Theta_\eta$ and initial state $x_0\in D$, there is a unique maximal solution $x^\theta(\cdot;x_0)$ to the initial value problem (IVP).
Next, we are interested in formalizing the conditions for measurability of the map $\theta \mapsto x^\theta(t)$.

For each $\theta \in \Theta$ we consider the parametric ODE:
\begin{equation}\label{eq:parametric-ode}
  \dot{x}(t) = F(\theta, x(t)), 
  \qquad x(0) = x_0 \in \mathcal{D},
\end{equation}
with the corresponding $x^\theta(t; x_0) \in \mathcal{D}$.
In order to speak meaningfully about events of the form:
\[
  \{\theta \in \Theta : x^\theta(t; x_0) \in \mathcal{S}\},
  \qquad \mathcal{S} \subset \mathcal{D},
\]
we require that, for each fixed $t \ge 0$, the map
$
  \Theta \ni \theta \mapsto x^\theta(t; x_0) \in \mathcal{D}
$
is $\mathcal{B}(\Theta)$-measurable.

A standard result from parametric ODE theory of Carath\'eodory type ensures this measurability under mild regularity assumptions  on the vector field:
\begin{assumption}\label{ass:caratheodory}
Let $F : \Theta \times \mathcal{D} \to \mathbb{R}^n$ satisfy:
\begin{enumerate}[i)]
  \item $F$ is jointly continuous in $(\theta,x)$;
  \item $F$ is locally Lipschitz in $x$, uniformly in $\theta$ on compact subsets of $\Theta$. 
\end{enumerate}
\end{assumption}
Under Assumption~\ref{as:ode-regularity}~{and}~\ref{ass:caratheodory}, the initial value problem \eqref{eq:parametric-ode}
admits a unique local solution for each $(\theta,x_0) \in \Theta \times \mathcal{D}$, and the solution map:
\[
  (\theta,x_0) \longmapsto x(t;\theta,x_0)
\]
is continuous for every fixed $t$.
In particular, fixing $x_0$ and $t$, the map:
\[
  \Theta \ni \theta \longmapsto x^\theta(t; x_0)
\]
is continuous, hence Borel measurable, on any subset $\Theta_\eta \subset \Theta$ where the above conditions hold.

Consequently, for any Borel set $\mathcal{S} \subset \mathcal{D}$ and any $t \ge 0$,
\[
  \{\theta \in \Theta : x^\theta(t; x_0) \in \mathcal{S}\} \in \mathcal{B}(\Theta).
\]
Therefore, one can meaningfully consider pathwise events such as:
\[
  \bigl\{\theta \in \Theta : x^\theta(t; x_0) \in \mathcal{S}\  \forall t \ge 0\bigr\},
\]
once an appropriate $\sigma$-algebra on the path space is specified.
This provides the basic measurability setup for a Bayesian model and allows us to establish the measurability of infinite-horizon safety events.

\begin{lemma}[Measurability of the infinite-horizon safety event]\label{lem:measurable_safe_event}
Let $\mathcal{S} = \{x \in \mathcal{D} : h(x) \ge 0\}$ be a closed safe set and fix $x_0 \in \mathcal{D}$.  
Under Assumptions~\ref{as:ode-regularity} and~\ref{ass:caratheodory}, for each $t \ge 0$ the solution map
\[
  \Theta \ni \theta \longmapsto x^\theta(t; x_0) \in \mathcal{D}
\]
is Borel-measurable, and the event
\[
  A_\infty 
  \coloneqq 
  \bigl\{\theta \in \Theta : x^\theta(t; x_0) \in \mathcal{S} \ \forall t \ge 0\bigr\}
\]
belongs to $\mathcal{B}(\Theta)$. Consequently, the probability
\[
  \pi\bigl(A_\infty \mid \mathfrak{D}\bigr)
  =
  \pi\bigl(\theta : x^\theta(t; x_0) \in \mathcal{S}\ \forall t \ge 0  \bigm|  \mathfrak{D}\bigr)
\]
is well-defined.
\end{lemma}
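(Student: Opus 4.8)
The plan is to rewrite the infinite-horizon safety event as a \emph{countable} intersection of Borel sets and then invoke continuous dependence of ODE solutions on the parameter. First I would record that, under Assumption~\ref{as:ode-regularity}, the linear-growth bound together with Gr\"onwall's inequality rules out finite-time blow-up, so for every $\theta\in\Theta$ the maximal solution $x^\theta(\cdot;x_0)$ is defined on all of $[0,\infty)$ and $t\mapsto x^\theta(t;x_0)$ is continuous; thus $A_\infty$ is a genuine statement about an infinite-horizon trajectory. Next, by the parametric Carath\'eodory result recalled just before the lemma (joint continuity of $F$ in $(\theta,x)$ and local Lipschitzness in $x$ uniform in $\theta$ on compacta, Assumptions~\ref{as:ode-regularity}--\ref{ass:caratheodory}), the map $(\theta,x_0)\mapsto x(t;\theta,x_0)$ is continuous for each fixed $t$, so with $x_0$ fixed the map $\theta\mapsto x^\theta(t;x_0)$ is continuous and hence $\mathcal B(\Theta)$-measurable.

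The crux is then the identity
\[
  A_\infty=\bigcap_{q\in\mathbb Q_{\ge0}}\bigl\{\theta\in\Theta: x^\theta(q;x_0)\in\mathcal S\bigr\}.
\]
The inclusion ``$\subseteq$'' is immediate. For ``$\supseteq$'', suppose $x^\theta(q;x_0)\in\mathcal S$ for every rational $q\ge0$ and fix an arbitrary $t\ge0$; choosing rationals $q_k\to t$, path continuity gives $x^\theta(q_k;x_0)\to x^\theta(t;x_0)$, and since $\mathcal S$ is closed and each iterate lies in $\mathcal S$, the limit $x^\theta(t;x_0)$ lies in $\mathcal S$ as well. Each set on the right is the preimage of the closed (hence Borel) set $\mathcal S$ under the Borel-measurable map $\theta\mapsto x^\theta(q;x_0)$ from the previous paragraph, so it belongs to $\mathcal B(\Theta)$; a countable intersection of Borel sets is Borel, whence $A_\infty\in\mathcal B(\Theta)$. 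Since $\pi(\cdot\mid\mathfrak D)$ is a probability measure on $(\Theta,\mathcal B(\Theta))$, $\pi(A_\infty\mid\mathfrak D)$ is well defined, which is the claim.

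The main obstacle is the passage in the displayed identity from ``$x^\theta(t)\in\mathcal S$ for all $t\ge0$'' to its restriction to rational times: it relies \emph{simultaneously} on the closedness of $\mathcal S$ (so limits of points of $\mathcal S$ remain in $\mathcal S$) and on continuity of $t\mapsto x^\theta(t;x_0)$ together with global existence (so that the trajectory is actually defined at every rational time and rational approximants converge to the correct state), the latter being exactly what the linear-growth assumption secures. Everything else---continuous dependence on $\theta$, preimages of closed sets under continuous maps being Borel, and closure of $\mathcal B(\Theta)$ under countable intersections---is standard and requires no further work.
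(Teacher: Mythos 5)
Your proposal is correct and follows essentially the same route as the paper's proof: continuity of $\theta\mapsto x^\theta(t;x_0)$ from the Carath\'eodory assumptions gives measurability of each fixed-time event, and the reduction of $A_\infty$ to a countable intersection over rational times via path continuity and closedness of $\mathcal S$ is exactly the paper's argument. Your explicit Gr\"onwall remark on global existence and the direct limit argument for the ``$\supseteq$'' inclusion are slightly more careful than the paper's phrasing but add nothing essentially new.
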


\begin{proof}
Fix $t \ge 0$ and define
\[
  A_t \coloneqq \{\theta \in \Theta : x^\theta(t; x_0) \in \mathcal{S}\}.
\]
By Assumption~\ref{ass:caratheodory}, the map $\theta \mapsto x^\theta(t; x_0)$ is continuous (hence Borel-measurable) for each fixed $t$.  
Since $\mathcal{S}$ is closed, the indicator of $\mathcal{S}$ is Borel-measurable, so $A_t \in \mathcal{B}(\Theta)$ for every $t \ge 0$.

We now show that the infinite-horizon safety event is measurable.  
Define $A_\infty \coloneqq \{\theta \in \Theta : x^\theta(t; x_0) \in \mathcal{S} \ \forall t \ge 0\}$.
Because each trajectory $t \mapsto x^\theta(t; x_0)$ is continuous in $t$ and $\mathcal{S}$ is closed, we have
\[
  x^\theta(t; x_0) \in \mathcal{S} \ \forall t \ge 0
  \quad\Longleftrightarrow\quad
  x^\theta(t; x_0) \in \mathcal{S} \ \forall t \in \mathbb{Q}_{\ge 0},
\]
where $\mathbb{Q}_{\ge 0}$ denotes the set of nonnegative rationals.  
Indeed, if a continuous trajectory ever leaves $\mathcal{S}$ at some time $t^\ast$, then by continuity it must cross the boundary of $\mathcal{S}$ at times arbitrarily close to $t^\ast$, and in particular at some rational time.

Therefore,
\[
  A_\infty
  =
  \bigcap_{q \in \mathbb{Q}_{\ge 0}}
  \{\theta \in \Theta : x^\theta(q; x_0) \in \mathcal{S}\}
  =
  \bigcap_{q \in \mathbb{Q}_{\ge 0}} A_q.
\]
The index set $\mathbb{Q}_{\ge 0}$ is countable, and each $A_q \in \mathcal{B}(\Theta)$, so $A_\infty$ is a countable intersection of measurable sets and hence belongs to $\mathcal{B}(\Theta)$.

It follows that $\pi(A_\infty \mid \mathfrak{D})$ is well-defined, which provides the measure-theoretic justification for the Bayesian forward-invariance probability
  $\pi\bigl(\theta : x^\theta(t; x_0) \in \mathcal{S} \ \forall t \ge 0  \bigm|  \mathfrak{D}\bigr)$.
\end{proof}

\subsection{CBF Feasibility under Bounded Inputs}
\label{sec:cbf-feasibility}
The following assumptions developed in~\cite{leung2026port}, 
ensure
that the CBF inequality for~$h_\gamma$ is feasible under bounded
inputs.

\begin{assumption}[Port-insulated barrier~{\cite[Asm.~2]{leung2026port}}]%
\label{ass:port-insulated}
  $\partial\mathcal{B} \subseteq \mathcal{P}$.
\end{assumption}

\begin{assumption}[Blanket coercivity~{\cite[Asm.~3]{leung2026port}}]%
\label{ass:blanket-coercivity}
  There exists $\sigma_{\mathcal{B}} > 0$ such that
  $\|L_G h_\gamma(x)\|
  \geq (\sigma_{\mathcal{B}}/\gamma)\,
  \|e_{\partial\mathcal{B}}(x)\|$
  for all $x \in \mathcal{D}$,
  where
  $\|e_{\partial\mathcal{B}}(x)\|
  \coloneqq
  (\sum_{j\in\partial\mathcal{B}}
  \|\nabla H_j(x_j)\|^2)^{1/2}$
  is the collective blanket effort norm.
\end{assumption}

\begin{assumption}[Benign degeneracy~{\cite[Asm.~4]{leung2026port}}]%
\label{ass:benign-degen}
  There exists an open neighborhood~$U$ of
  $\mathcal{Z}_{\partial\mathcal{B}} \cap
  \partial\mathcal{S}_\gamma$ in~$\partial\mathcal{S}_\gamma$
  such that
  $L_f h_\gamma(x) \geq 0$ for all $x \in U$.
\end{assumption}

\begin{assumption}[Compact design safety
  boundary~{\cite[Asm.~5]{leung2026port}}]%
\label{ass:compact}
  The set $\partial\mathcal{S}_\gamma$ is compact.
\end{assumption}

\begin{theorem}[CBF feasibility~{\cite[Thm.~2]{leung2026port}}]%
\label{thm:pt-feasibility}
  Under
  Assumptions~\ref{ass:structural}--\ref{ass:compact},
  the worst-case authority ratio
  \begin{equation}\label{eq:authority-bound}
    \bar{u}_\gamma
    \coloneqq
    \sup_{x \in \partial\mathcal{S}_\gamma
          \setminus\mathcal{Z}_{\partial\mathcal{B}}}
    \frac{(-L_f h_\gamma(x))^+}{\|L_G h_\gamma(x)\|}
  \end{equation}
  is finite.
  Hence, if
  $\mathcal{U} \supseteq \{u : \|u\| \leq \bar{u}_\gamma\}$,
  then the CBF condition for~$h_\gamma$ is feasible on
  $\partial\mathcal{S}_\gamma$, and~$\mathcal{S}_\gamma$ is
  forward invariant under any safety filter enforcing it.
\end{theorem}

\subsection{Mechanical Instantiation}
\label{sec:app:mech-inst}
For a fully actuated mechanical system in canonical coordinates
$(q,p)$ with Hamiltonian $H(q,p) = T(q,p) + V(q)$,
kinetic energy $T = \tfrac{1}{2}p^\top M(q)^{-1}p$, and
full-rank input matrix~$B$, the port-Hamiltonian
form~\eqref{eq:true-dynamics} has
$J = \bigl[\begin{smallmatrix} 0 & I \\ -I & 0
\end{smallmatrix}\bigr]$,
$R = \bigl[\begin{smallmatrix} 0 & 0 \\ 0 & D
\end{smallmatrix}\bigr]$,
$G = \bigl[\begin{smallmatrix} 0 \\ B
\end{smallmatrix}\bigr]$.
When $M(q)$ depends on configuration, separability is restored
by the screw-momenta
lifting~\cite[Ch.~3.2]{duindam2009modeling}.

\begin{corollary}[Energy-aware barrier~{\cite[Cor.~2]{leung2026port}}]%
\label{cor:mechanical}
  For a configuration constraint $\varphi(q) \geq 0$, the
  barrier-insulating blanket is
  $\partial\mathcal{B} = \{p\}$ with
  $\partial\mathcal{B} \cap \mathcal{P} = \{p\}$, and the
  port-transversal barrier reduces to:
  \begin{equation}\label{eq:mech-barrier}
    h_\gamma(q,p) = \varphi(q) - \frac{1}{\gamma}\,T(q,p),
  \end{equation}
  with $L_G h_\gamma = -\tfrac{1}{\gamma}\,\dot{q}^\top B$,
  nonzero whenever $\dot{q} \neq 0$.
  Assumptions~\ref{ass:port-insulated}--\ref{ass:compact} are
  satisfied, and $\mathcal{S}_\gamma$ is forward invariant
  under the safety filter of Theorem~\ref{thm:pt-feasibility}.
\end{corollary}

Corollary~\ref{cor:mechanical} recovers the energy-aware barrier
of~\cite{singletary2021safety} as a special case.
For the Bayesian development that follows, the key observation
is that evaluating~\eqref{eq:mech-barrier} requires the kinetic
energy $T(q,p)$, which in turn depends on the unknown
Hamiltonian.
More generally, evaluating any port-transversal
barrier~\eqref{eq:pt-barrier-blanket} requires the shifted storages
$\{\bar{H}_j(x_j)\}_{j \in \partial\mathcal{B} \cap \mathcal{P}}$,
which are the scalar quantities whose posterior uncertainty the
next section addresses.

\end{document}